\documentclass[11pt,letter,a4paper]{article}
\usepackage{amssymb}
\usepackage[normalem]{ulem}
\usepackage{natbib,comment}
\usepackage{graphicx}
\usepackage{rotating}
\usepackage[bookmarksopen,bookmarksnumbered,colorlinks,linkcolor={blue},citecolor={blue}]{hyperref}
\usepackage{theorem}
\usepackage{epsfig}
\usepackage{amsmath}
\usepackage{amsfonts}
\usepackage[singlespacing]{setspace}
\usepackage{times}

\setcounter{MaxMatrixCols}{10}

\newtheorem{theorem}{Theorem}
\newtheorem{assumption}{Assumption}

\newtheorem{corollary}[theorem]{Corollary}

\newtheorem{lemma}[theorem]{Lemma}

\newtheorem{proposition}[theorem]{Proposition}

\newenvironment{proof}[1][Proof]{\noindent\textbf{#1.} }{\ \rule{0.5em}{0.5em}}
\input{tcilatex}
\begin{document}

\author{Mogens Fosgerau\thanks{%
Technical University of Denmark; mfos@dtu.dk} \quad Emerson Melo\thanks{%
Indiana University; emelo@iu.edu} \quad Andr\'{e} de Palma\thanks{%
ENS Cachan, Universit\'{e} Paris-Saclay, CREST; andre.depalma@ens-cachan.fr.}
\quad Matthew Shum\thanks{%
California Institute of Technology; mshum@caltech.edu} }
\title{Discrete Choice and Rational Inattention: a General Equivalence Result%
\thanks{%
First draft: December 22, 2016. We thank Bob Becker, Marcus Berliant, Mark
Dean, Federico Echenique, Juan Carlos Escanciano, Filip Matejka, Paulo
Natenzon, Antonio Rangel, Ryan Webb, and Michael Woodford for useful
comments. Lucie Letrouit, Julien Monardo, and Alejandro Robinson Cortes
provided research assistance.}}
\date{\noindent {\today} \\
}
\maketitle

\begin{abstract}
This paper establishes a general equivalence between discrete choice and
rational inattention models. Matejka and McKay (2015, \emph{AER}) showed
that when information costs are modelled using the Shannon entropy function,
the resulting choice probabilities in the rational inattention model take
the multinomial logit form. By exploiting convex-analytic properties of the
discrete choice model, we show that when information costs are modelled
using a class of generalized entropy functions, the choice probabilities in
\emph{any} rational inattention model are observationally equivalent to some
additive random utility discrete choice model and vice versa.  Thus any additive random utility model can be given an
interpretation in terms of boundedly rational behavior.
This includes empirically relevant specifications  such as the probit and nested logit models.
%

\bigskip

\noindent\noindent \textbf{{JEL codes: } D03, C25, D81, E03 }

\noindent \noindent \textbf{{Keywords: Rational Inattention, discrete
choice, random utility, convex analysis, generalized entropy} }
\end{abstract}

\section{Motivation}

In many situations where agents must make decisions under uncertainty,
information acquisition is costly (involving pecuniary, time, or psychic
costs); therefore, agents may rationally choose to remain imperfectly
informed about the available options. This idea underlies the theory of
rational inattention, which has become an important paradigm for modeling
boundedly rational behavior in many areas of economics (Sims (2003, 2010)).
In this paper, our main contribution is to establish a general equivalence
between additive random utility discrete choice and rational inattention
models. Matejka and McKay (2015) showed that when information costs are
modelled using the Shannon entropy function, the resulting choice
probabilities in the rational inattention model take the multinomial logit
(MNL) form. In order for the rational inattention model to generate non-MNL
choice probabilities, we need to generalize the information cost function
beyond the Shannon entropy function assumed in much of the existing
literature. We do this by exploiting convex-analytic properties of the
additive random utility model to demonstrate a duality between discrete
choice and rational inattention models.\footnote{Throughout
this paper, we will use the terms ``additive random utility model'' and
``discrete choice model'' interchangeably.}

Specifically, we introduce a class of \emph{Generalized Entropy Rational
Inattention} (GERI) models.\footnote{%
This complements work by \citet{HebertWoodford16}, who also consider
generalizations of the information cost function.} In GERI models, the
information cost functions are constructed from a class of ``generalized
entropy'' functions; these generalized entropy functions are, essentially,
``dual'' to the class of random utility discrete choice models; precisely,
the generalized entropy functions are the convex conjugate functions to the
surplus functions in any arbitrary general additive random utility model.
Hence, GERI models naturally yield choice probabilities that can
equivalently be generated from general additive random utility models; the
resulting choice probabilities can take forms far beyond the multinomial
logit, including specifications such as nested logit, multinomial probit,
and so on, which are often employed in empirical work.

Importantly, these generalized entropy functions allow for random utility
models in which the random shocks are dependent across options; this
corresponds to information cost functions that exhibit information
spillovers across options with shared features, which may be reasonable in
many decision environments. In contrast, the multinomial logit model assumes
independent shocks; correspondingly, the Shannon entropy function precludes
information spillovers.

The paper is organized as follows. Section 2 presents insights into the
fundamental convex-analytic structure of the additive random utility
discrete choice model. Using this structure, we formulate a class of
generalized entropy functions and present key results about them. Section 3
introduces the rational inattention model. We show how the generalized
entropy functions can be used to define the information cost function in the
rational inattention model. Then we present the key result from this paper,
which establishes the equivalence between choice probabilities emerging from
the discrete choice model, and those emerging from the rational inattention
model based on the generalized entropy functions. Section 4 discusses an
example while Section 5 concludes.

\textbf{Notation:} Throughout this paper, for vectors $\mathbf{a}$ and $%
\mathbf{b}$, we use the notation $\mathbf{a}\cdot \mathbf{b}$ to denote the
vector scalar product $\sum_i a_ib_i$. $\Delta$ denotes the unit simplex in $%
\mathbb{R}^{N}$.

\section{Random utility models and generalized entropy functions\label%
{sec:dc}}

Consider a decision-maker (DM) making discrete choices among a set of $%
i=1,\ldots ,N$ options, where, for each option $i$, the utility is given by
\begin{equation}
u_{i}=\tilde{v}_{i}+\epsilon _{i},  \label{random utility}
\end{equation}%
where $\mathbf{\tilde{v}}=(\tilde{v}_{1},\ldots ,\tilde{v}_{N})$ is
deterministic and $\boldsymbol{\epsilon }=(\epsilon _{1},\ldots ,\epsilon
_{N})$ is a vector of random utility shocks. This is the classic additive
random utility framework pioneered by \citet{McFadden78}.

\begin{assumption}
\label{RUM_conditions} The random vector $\boldsymbol{\epsilon }=(\epsilon
_{1},\ldots ,\epsilon _{N})$ follows a joint distribution with finite means
that is absolutely continuous, independent of $\tilde{\mathbf{v}}$, and
fully supported on $\mathbb{R}^{N}$.
\end{assumption}

An important concept in this paper is the \emph{surplus function} of the
discrete choice model \citep[so named by ][]{McFadden81}, defined as
\begin{equation}  \label{SSF}
W(\mathbf{\tilde{v}}) = \mathbb{E}_{\boldsymbol{\epsilon}} (\max_{i} [\tilde{%
v}_i + \epsilon_i] ).
\end{equation}

Under Assumption \ref{RUM_conditions}, $W(\mathbf{\tilde{v}})$ is convex and
differentiable\footnote{%
The convexity of $W(\cdot )$ follows from the convexity of the max function.
Differentiability follows from the absolute continuity of $\boldsymbol{%
\epsilon }$.} and the choice probabilities coincide with the derivatives of $%
W(\mathbf{\tilde{v}})$:
\begin{equation*}
\frac{\partial W(\mathbf{\tilde{v}})}{\partial \tilde{v}_{i}}=q_{i}(\mathbf{%
\tilde{v}})\equiv \mathbb{P}\left( \tilde{v}_{i}+\epsilon _{i}\geq \tilde{v}%
_{j}+\epsilon _{j},\,\forall j\neq i\right) \,\,\,\mbox{for $i=1,\ldots,N$}
\end{equation*}%
or, using vector notation, $\mathbf{q}(\mathbf{\tilde{v}})=\nabla W(\mathbf{%
\tilde{v}})$. This is the Williams-Daly-Zachary theorem in the discrete
choice literature \citep{McFadden78, McFadden81}.

As a running example, we consider the familiar logit model. When the $%
\epsilon _{i}$'s are distributed i.i.d. across options $i$ according to the
type 1 extreme value distribution, then the resulting choice probabilities
take the familiar multinomial logit form: $q_{i}(\mathbf{\tilde{v}})=e^{%
\tilde{v}_{i}}/\sum_{j}e^{\tilde{v}_{j}}$. Assumption 1 above leaves the
distribution of the $\epsilon $'s unspecified, thus allowing for choice
probabilities beyond the multinomial logit case. Importantly, it
accommodates arbitrary correlation in the $\epsilon_{i}$'s across choices,
which is reasonable and realistic in applications.

We define a vector-valued function $\mathbf{H}(\cdot)=(H_1(\cdot),...,H_N(%
\cdot)):\mathbb{R}_+^N \mapsto \mathbb{R}_+^N$ as the gradient of the
exponentiated surplus, i.e.
\begin{equation}  \label{h function}
\mathbf{H}(e^{\tilde{\mathbf{v}}})= \nabla_{\mathbf{\tilde{v}}} \left( e^{W(%
\mathbf{\tilde{v}})}\right).
\end{equation}

From the differentiability of $W$ and the Williams-Daly-Zachary theorem it
follows that the choice probabilities emerging from any random utility
discrete choice model can be expressed in closed-form in terms of the $%
\mathbf{H}$ function as:\footnote{%
By direct differentiation of (\ref{h function}), and applying the
Williams-Daly-Zachary theorem, we have $q_i(\tilde{\mathbf{v}})=H_i(e^{W(%
\tilde{\mathbf{v}})})/e^{W(\tilde{\mathbf{v}})}$ for all $i$. Imposing the
summability restriction $\sum_i q_i(\tilde{\mathbf{v}})=1$ we have $\sum_i
H_i(e^{W(\tilde{\mathbf{v}})})=e^{W(\tilde{\mathbf{v}})}$ leading to Eq. (%
\ref{fdp1}).}
\begin{equation}
q_{i}(\tilde{\mathbf{v}})=\frac{H_{i}\left( e^{\tilde{\mathbf{v}}}\right) }{%
\sum_{j=1}^{N}H_{j}\left( e^{\tilde{\mathbf{v}}}\right) },\quad
\mbox{for
$i=1,\ldots,N$}.  \label{fdp1}
\end{equation}

For the multinomial logit case, the surplus function is $W(\mathbf{\tilde{v}}%
)=\log \left( \sum_{i=1}^N e^{\tilde{v}_{i}}\right) $, implying that $%
H_{i}(e^{\tilde{\mathbf{v}}})=e^{\tilde{v}_{i}}$. Thus, for this case Eq. (%
\ref{fdp1}) becomes the multinomial logit choice formula.

The function $\mathbf{H}$ is globally invertible (see Lemma \ref{h is
invertible} in the Appendix), and we introduce a function $\mathbf{S}$
defined as the inverse of $\mathbf{H}$,%
\begin{equation}
\mathbf{S}(\cdot)=\mathbf{H}^{-1}(\cdot).  \label{S_function}
\end{equation}%
For reasons that will be apparent below, we refer to $\mathbf{S}$ as a \emph{%
generator} function. There is a close relationship between the function $%
\mathbf{S}(\cdot )$ and the surplus function $W(\mathbf{\tilde{v}})$ of the
corresponding discrete choice model: as the next proposition establishes,
the surplus function $W(\cdot)$ and the generator function $\mathbf{S}(\cdot)
$ are related in terms of \emph{convex conjugate duality} \citep[ch.
12]{Rockafellar1970}.\footnote{%
For a convex function $g(\mathbf{x})$, its convex conjugate function is
defined as $g^*(\mathbf{y})=\max_{\mathbf{x}} \left\{\mathbf{x}\cdot \mathbf{%
y} - g(\mathbf{x})\right\}$, which is also convex. Roughly speaking, the
gradients (or sub-gradients, in case of non-differentiability) of $g(\mathbf{%
x})$ and $g^*(\mathbf{y})$ are inverse mappings to each other.}

\begin{proposition}[Convexity properties and generalized entropy functions]
\label{General_Entropy} Let assumption \ref{RUM_conditions} hold. Then:

\begin{description}
\item[(i)] The surplus function $W(\tilde{\mathbf{v}})$ is equal to
\begin{equation}
W(\tilde{\mathbf{v}})=\log \left( \sum_{i=1}^{N}H_{i}(e^{\tilde{\mathbf{v}}%
})\right) .  \label{surplusentropy}
\end{equation}

\item[(ii)] The convex conjugate function for the surplus function $W(%
\mathbf{\tilde{v}})$ is
\begin{equation*}
W^*(\mathbf{q})=\left\{
\begin{array}{ll}
\mathbf{q}\cdot \log \mathbf{S}(\mathbf{q}) & \text{$\mathbf{q}\in\Delta$}
\\
+\infty & \text{otherwise},%
\end{array}
\right.
\end{equation*}
where $\mathbf{S}(\cdot)$ is a generator function defined in (\ref%
{S_function}). We call the negative convex conjugate function $-W^*(\cdot)$
a {\bfseries generalized entropy function}.

\item[(iii)] The surplus function $W(\tilde{\mathbf{v}})$ is the convex
conjugate of $W^{\ast }(\mathbf{q})$, that is%
\begin{equation}
W(\mathbf{\tilde{v}})=\max_{\mathbf{q}\in \Delta }\left\{ \mathbf{q}\cdot
\mathbf{\tilde{v}}-W^{\ast }(\mathbf{q})\right\}  \label{surplusconjugate}
\end{equation}
and the RHS is optimized at the choice probabilities $\mathbf{q}(\mathbf{%
\tilde{v}})=\nabla W(\mathbf{\tilde{v}})$.
\end{description}
\end{proposition}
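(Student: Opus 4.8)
The plan is to handle the three parts in sequence, resting on two elementary structural properties of the surplus function that follow directly from its definition in \eqref{SSF}. First, \emph{translation invariance}: $W(\tilde{\mathbf{v}} + c\mathbf{1}) = W(\tilde{\mathbf{v}}) + c$ for every scalar $c$, since adding $c$ to every utility shifts the maximum by $c$. Second, \emph{monotonicity}: $W$ is nondecreasing in each $\tilde{v}_i$, since the max and the expectation both preserve monotonicity. Differentiating the first identity at $c=0$ gives $\nabla W \cdot \mathbf{1} = 1$, and monotonicity gives $\nabla W \geq \mathbf{0}$; hence the gradient map $\mathbf{q}(\tilde{\mathbf{v}}) = \nabla W(\tilde{\mathbf{v}})$ takes values in $\Delta$.

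For part (i) I would differentiate \eqref{h function} by the chain rule: $H_i(e^{\tilde{\mathbf{v}}}) = \partial_{\tilde{v}_i} e^{W} = e^{W}\,\partial_{\tilde{v}_i} W = e^{W} q_i$. Summing over $i$ and using $\sum_i q_i = 1$ gives $\sum_i H_i(e^{\tilde{\mathbf{v}}}) = e^{W}$, and taking logarithms yields \eqref{surplusentropy}. This is precisely the computation sketched in the footnote to \eqref{fdp1}.

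For part (ii) I would compute $W^*(\mathbf{q}) = \sup_{\tilde{\mathbf{v}}} \{\mathbf{q}\cdot\tilde{\mathbf{v}} - W(\tilde{\mathbf{v}})\}$ directly, first pinning down its effective domain. Moving $\tilde{\mathbf{v}}$ along $\mathbf{1}$ and using translation invariance, the objective changes by $t(\mathbf{q}\cdot\mathbf{1}-1)$, so the supremum is $+\infty$ unless $\mathbf{q}\cdot\mathbf{1}=1$; sending a single coordinate $\tilde{v}_i \to -\infty$ with the others fixed makes $q_i\tilde{v}_i \to +\infty$ while $W$ stays bounded below (it converges to the surplus of the remaining options), so the supremum is $+\infty$ whenever some $q_i<0$. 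Hence $W^*(\mathbf{q}) = +\infty$ off $\Delta$. On $\Delta$, convexity and differentiability guarantee the supremum is attained where $\nabla W(\tilde{\mathbf{v}}) = \mathbf{q}$. To evaluate it I would record that $\mathbf{H}$ is homogeneous of degree one: since $\lambda e^{\tilde{\mathbf{v}}} = e^{\tilde{\mathbf{v}} + (\log\lambda)\mathbf{1}}$ and both $W$ and $\mathbf{q}(\cdot)$ transform predictably under translation, $H_i(\lambda e^{\tilde{\mathbf{v}}}) = \lambda e^{W} q_i = \lambda H_i(e^{\tilde{\mathbf{v}}})$, so its inverse $\mathbf{S}$ is homogeneous of degree one too. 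Combining $H_i(e^{\tilde{\mathbf{v}}}) = e^{W} q_i$ with invertibility and this homogeneity gives $e^{\tilde{\mathbf{v}}} = \mathbf{S}(e^{W}\mathbf{q}) = e^{W}\mathbf{S}(\mathbf{q})$, i.e. $\tilde{v}_i = W + \log S_i(\mathbf{q})$. Substituting into $W^*(\mathbf{q}) = \mathbf{q}\cdot\tilde{\mathbf{v}} - W$ and again using $\sum_i q_i = 1$ cancels the $W$ terms and leaves $W^*(\mathbf{q}) = \mathbf{q}\cdot\log\mathbf{S}(\mathbf{q})$.

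Part (iii) then follows from the Fenchel--Moreau theorem: under Assumption \ref{RUM_conditions} $W$ is convex and continuous, hence closed, so $W^{**}=W$, which is exactly \eqref{surplusconjugate} once the effective domain from part (ii) restricts the supremum to $\Delta$; the optimizer is characterized by $\tilde{\mathbf{v}} \in \partial W^*(\mathbf{q})$, equivalently $\mathbf{q} = \nabla W(\tilde{\mathbf{v}})$, the choice probabilities. I expect the main obstacle to be the two-sided characterization of the effective domain, and in particular the boundary of $\Delta$, where $q_i = 0$ forces $\tilde{v}_i \to -\infty$ so the supremum is approached rather than attained; there the identity $W^*(\mathbf{q}) = \mathbf{q}\cdot\log\mathbf{S}(\mathbf{q})$ must be secured by a continuity/closedness argument, reading $0\cdot\log(\cdot)$ as $0$. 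The interior computation via the degree-one homogeneity of $\mathbf{S}$ is the conceptual core and is otherwise routine.
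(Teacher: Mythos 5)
Your proposal is correct and follows essentially the same route as the paper: the effective-domain argument via translation invariance and a coordinate sent to $-\infty$, the first-order condition $\nabla W(\tilde{\mathbf{v}})=\mathbf{q}$ inverted through the degree-one homogeneity of $\mathbf{H}$ to get $\tilde{\mathbf{v}}=W\mathbf{1}+\log\mathbf{S}(\mathbf{q})$, and biconjugation of the closed convex $W$ for part (iii). The only (cosmetic) difference is that you prove part (i) directly from $H_i(e^{\tilde{\mathbf{v}}})=e^{W}q_i$ and $\sum_i q_i=1$ --- the computation the paper places in a footnote --- whereas the paper's formal proof recovers (i) by substituting the optimizer back into (iii); both rest on the same identity.
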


Parts (i) and (ii) establish a specific structure of the surplus function $W$
and its convex conjugate $W^*$; this is new in the literature on random
utility models, and may be of independent interest. We use this structure to
define the class of \emph{generalized entropy} functions. To see how this
works, consider again the multinomial logit model, for which $\mathbf{H}$ is
the identity, implying that the corresponding generator function $\mathbf{S}(%
\mathbf{q})=\mathbf{q}$ is also just the identity. Then by Proposition~\ref%
{General_Entropy}(ii), the negative convex conjugate function is $-W^{\ast }(%
\mathbf{q})=-\mathbf{q}\cdot \log \mathbf{q}=-\sum_i q_i \log q_i$, which is
just the \citet{Shannon1948} entropy function.

Generalizing from this, Proposition~\ref{General_Entropy}(ii) shows how the
conjugate function for any discrete choice model can be generated by the
function $\mathbf{S} $. Therefore we refer to the negative conjugate
function $-W^{\ast }(\mathbf{q})=-\mathbf{q}\cdot \log \mathbf{S}(\mathbf{q}%
)=-\sum_i q_i \log S_i(\mathbf{q})$ for any general discrete choice model as
a \emph{generalized entropy} function. Comparing the generalized and Shannon
entropies, the former allows for cross-effects, in the sense that the choice
probability for option $j$, $q_j$, enters the entropic term for option $i$, $%
S_i(\mathbf{q})$. As we will see below, these cross-effects allow for
information spillovers when we use these generalized entropy functions to
construct rational inattention models.

Proposition~\ref{General_Entropy}(iii) provides an alternative
representation of the surplus function from a random utility model, in
addition to Eq. (\ref{SSF}). It illustrates a close connection between $-W^*(%
\mathbf{q})$ and the joint distribution of $\boldsymbol{\mathbf{\epsilon }}$%
, the random utility shocks, which aids interpretation of the generalized
entropy function. Specifically, 
Eq. (\ref{SSF}) implies that the surplus function can be written as
\begin{equation*}
W(\tilde{\mathbf{v}})=\sum_{i=1}^{N}q_{i}(\tilde{\mathbf{v}})(\tilde{v}_{i}+%
\mathbb{E}(\epsilon _{i}|u_{i}\geq u_{j},j\neq i)).
\end{equation*}
Combining this with (\ref{surplusconjugate}), we obtain an alternative
expression for the generalized entropy function, as a choice
probability-weighted sum of expectations of the utility shocks $\boldsymbol{%
\mathbf{\epsilon}}$:\footnote{%
See \citet{ChiongEtAl2016}. Additionally, we conjecture that $\log S_{i}(%
\mathbf{q})=-\mathbb{E}[\epsilon _{i}|u_{i}\geq u_{j},j\neq i]\quad %
\mbox{for $i=1,\ldots,N$}$. For the multinomial logit case, corresponding to
$\mathbf{S}(\mathbf{q})=\mathbf{q}$, \citet{McFadden78} showed that $\gamma
-\log q_{i}=\mathbb{E}[\epsilon _{i}|u_{i}\geq u_{j},j\neq i]$, for $\gamma $
being Euler's constant.}
\begin{equation*}
-W^{\ast }(\mathbf{q})=\sum_{i}q_{i}\mathbb{E}[\epsilon _{i}|u_{i}\geq
u_{j},j\neq i].
\end{equation*}%
In this way, different distributions for the utility shocks $\boldsymbol{%
\mathbf{\epsilon}}$ in the random utility model will imply different
generalized entropy functions, and vice versa.

We conclude this section enumerating some properties of the generator
functions $\mathbf{S}(\cdot)$, which will be important in what follows.

\begin{proposition}[Properties of the generator functions]
\label{Generator} Let assumption \ref{RUM_conditions} hold. Then the vector
valued-function $\mathbf{S}(\cdot)$ defined by (\ref{S_function}) satisfies
the following conditions:

\begin{description}
\item[(i)] $\mathbf{S}$ is continuous and homogenous of degree 1.

\item[(ii)] $\mathbf{q}\cdot\log\mathbf{S}(\mathbf{q})$ is convex.

\item[(iii)] $\mathbf{S}$ is differentiable 
with :
\begin{equation*}
\sum_{i=1}^{N}q_{i}\frac{\partial \log S_{i}(\mathbf{q})}{\partial q_{k}}=1
,k\in \{1,\ldots ,N\},
\end{equation*}%
where
$\mathbf{q}$ is a probability vector with $0<q_i<1$ for all $i$.
\end{description}
\end{proposition}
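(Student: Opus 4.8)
The plan is to derive all three parts from the defining relation $\mathbf{H}(e^{\tilde{\mathbf{v}}})=\nabla_{\tilde{\mathbf{v}}}e^{W(\tilde{\mathbf{v}})}$ together with the translation property of the surplus function. It is convenient to introduce the exponentiated surplus $G(\mathbf{x})=e^{W(\log \mathbf{x})}$ on $\mathbb{R}_+^N$, so that by the chain rule $H_i(\mathbf{x})=x_i\,\partial G(\mathbf{x})/\partial x_i$ and, by Euler's theorem, $\sum_i H_i(\mathbf{x})=G(\mathbf{x})$ once $G$ is homogeneous of degree one. For part (i) I would first show this homogeneity. Since $W(\tilde{\mathbf{v}}+c\mathbf{1})=c+W(\tilde{\mathbf{v}})$ for every scalar $c$ (because $\max_i[\tilde v_i+c+\epsilon_i]=c+\max_i[\tilde v_i+\epsilon_i]$ inside the expectation in (\ref{SSF})), replacing $\tilde{\mathbf{v}}$ by $\tilde{\mathbf{v}}+c\mathbf{1}$ multiplies $e^{\tilde{\mathbf{v}}}$ by $\lambda=e^{c}$ and $e^{W}$ by $\lambda$; hence $G(\lambda\mathbf{x})=\lambda G(\mathbf{x})$. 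Then each $\partial G/\partial x_i$ is homogeneous of degree zero, so $H_i(\mathbf{x})=x_i\,\partial G/\partial x_i$ is homogeneous of degree one, and inverting preserves this: if $\mathbf{H}(\lambda\mathbf{s})=\lambda\mathbf{H}(\mathbf{s})$ then $\mathbf{S}(\lambda\mathbf{q})=\lambda\mathbf{S}(\mathbf{q})$. Continuity of $\mathbf{S}$ follows from continuity of $\mathbf{H}$ and the global invertibility established in Lemma~\ref{h is invertible}.

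Part (ii) is essentially immediate: by Proposition~\ref{General_Entropy}(ii) we have $\mathbf{q}\cdot\log\mathbf{S}(\mathbf{q})=W^{\ast}(\mathbf{q})$ on $\Delta$, and a convex conjugate is always convex, so $\mathbf{q}\cdot\log\mathbf{S}(\mathbf{q})$ is convex. Part (iii) is where the real work lies. Differentiability of $\mathbf{S}$ I would obtain from the inverse function theorem applied to $\mathbf{H}$, using that $\mathbf{H}$ is continuously differentiable with nonsingular Jacobian on the relevant region (the latter tied to the strict convexity of $W$ underlying the invertibility lemma). For the identity, the key observation is that $\mathbf{H}(\mathbf{S}(\mathbf{q}))=\mathbf{q}$, so $H_i(\mathbf{S}(\mathbf{q}))=q_i$ gives $S_i(\mathbf{q})\,\partial_{x_i}G(\mathbf{S}(\mathbf{q}))=q_i$, i.e. $\partial_{x_i}G(\mathbf{S}(\mathbf{q}))=q_i/S_i(\mathbf{q})$, and summing yields $G(\mathbf{S}(\mathbf{q}))=\sum_i H_i(\mathbf{S}(\mathbf{q}))=\sum_i q_i$. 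Differentiating this last identity in $q_k$ and applying the chain rule,
\begin{equation*}
1=\frac{\partial}{\partial q_k}\sum_i q_i=\frac{\partial}{\partial q_k}G(\mathbf{S}(\mathbf{q}))=\sum_i \frac{\partial G}{\partial x_i}(\mathbf{S}(\mathbf{q}))\,\frac{\partial S_i(\mathbf{q})}{\partial q_k}=\sum_i \frac{q_i}{S_i(\mathbf{q})}\frac{\partial S_i(\mathbf{q})}{\partial q_k}=\sum_i q_i\frac{\partial \log S_i(\mathbf{q})}{\partial q_k},
\end{equation*}
which is exactly the claimed relation.

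The main subtlety, and the step I would be most careful to state correctly, is that this computation requires reading $\partial/\partial q_k$ as a free partial derivative on the positive orthant rather than a derivative constrained to $\Delta$; it is precisely the off-simplex behavior $G(\mathbf{S}(\mathbf{q}))=\sum_i q_i$, and not the constant value $1$, that produces the right-hand side $1$. A naive computation holding $\sum_i q_i=1$ fixed would instead return $0$. I would also flag that the homogeneity-based Euler identity $\sum_k q_k\,\partial \log S_i/\partial q_k=1$ is only the transpose statement (summing over the differentiation index for fixed $i$) and does not by itself deliver (iii); the chain-rule argument routed through $G$ is therefore essential.
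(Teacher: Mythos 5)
Your proof is correct and follows essentially the same route as the paper: part (i) via the translation property $W(\tilde{\mathbf{v}}+c\mathbf{1})=c+W(\tilde{\mathbf{v}})$, part (ii) by citing Proposition~\ref{General_Entropy}(ii), and part (iii) by differentiating the identity $\sum_j H_j(\mathbf{S}(\mathbf{q}))=e^{W(\log\mathbf{S}(\mathbf{q}))}=\mathbf{1}\cdot\mathbf{q}$ through the inverse map. The only difference is presentational --- you differentiate this scalar identity directly in $\mathbf{q}$-coordinates, whereas the paper performs the equivalent computation by post-multiplying the row vector $\mathbf{q}\cdot J_{\log\mathbf{S}}(\mathbf{q})$ by the inverse Jacobian and invoking Williams--Daly--Zachary at the point $\hat{\mathbf{t}}=\log\mathbf{S}(\mathbf{q})$ where $W=0$; your closing caveat about treating $\partial/\partial q_k$ as a free derivative on the positive orthant is a correct and worthwhile clarification of a step the paper leaves implicit.
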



The possibility of zero choice probabilities will play a role in what
follows. We impose an additional regularity assumption on the generator
functions $\mathbf{S}$.

\begin{assumption}
\label{RUM_regularity} Let $\mathbf{q}$ be a probability vector. Then $%
q_{i}=0$ if and only if $S_{i}(\mathbf{q})=0$.
\end{assumption}

This assumption is satisfied for the generator functions $\mathbf{S}$
corresponding to many familiar additive random utility models, including the
multinomial logit and the nested logit models.\footnote{%
In fact, the necessity part of Assumption \ref{RUM_regularity} arises
immediately from the results in this section. As $\tilde{v}_{i}\rightarrow
-\infty $, $q_{i}(\tilde{\mathbf{v}})\rightarrow 0$, which by (\ref{fdp1})
implies that $H_{i}\left( e^{\tilde{\mathbf{v}}}\right) \rightarrow 0$.
Then, since $\log \mathbf{S}(\mathbf{q}(\tilde{\mathbf{v}}))=\tilde{\mathbf{v%
}}-\log \sum_{j}H_{j}(e^{\tilde{\mathbf{v}}})$, we have $\log
S_{1}(q)\rightarrow -\infty $ (by homogeneity of $\mathbf{H}$, we may
suppose that $\log \sum_{j}H_{j}(e^{\tilde{\mathbf{v}}})$ is a constant).}

\section{Rational inattention\label{s3}}

We now introduce the rational inattention model. The decision maker is again
presented with a group of $N$ options, from which he must choose one. Each
option has an associated payoff $\mathbf{v}=(v_{1},...,v_{N})$, but in
contrast to the additive random utility model, the vector of payoffs is
unobserved by the DM. Instead, the DM considers the payoff vector $\mathbf{V}
$ to be random, taking values in a set $\mathcal{V}\subset \mathbb{R}^{N}$;
for simplicity, we take $\mathcal{V}$ to be finite. The DM possesses some
prior knowledge about the available options, given by a probability measure $%
\mu (\mathbf{v})=\mathbb{P}(\mathbf{V}=\mathbf{v})$.

The DM's choice is represented as a random action $\mathbf{A}$ that is a
canonical unit vector in $\mathbb{R}^{N}$. The payoff resulting from the
action is $\mathbf{V}\cdot \mathbf{A}$, namely that value of the entry in $%
\mathbf{V}$ indicated by the action $\mathbf{A}$. The problem of the
rationally inattentive DM is to choose the conditional distribution $\mathbb{%
P} (\mathbf{A}|\mathbf{V})$, balancing the expected payoff against the cost
of information.

Denote an action by $i$ and write $p_{i}(\mathbf{v})$ as shorthand for $%
\mathbb{P} \left( \mathbf{A}=i|\mathbf{V}=\mathbf{v}\right) $. Denote also
the vector of choice probabilities conditional on $\mathbf{V}=\mathbf{v}$ by
$\mathbf{p}(\mathbf{v})=(p_{1}(\mathbf{v}),\dots ,p_{N}(\mathbf{v}))$, and
let $\mathbf{p}(\cdot )=\{\mathbf{p}(\mathbf{v})\}_{\mathbf{v}\in\mathcal{V}%
} $ denote the collection of conditional probabilities. The DM's strategy is
a solution to the following variational problem:
\begin{equation}
\max_{\mathbf{p}(\cdot )}\left( \mathbb{E}\left( \mathbf{V}\cdot \mathbf{A}%
\right) -\mbox{information cost}\right) .  \label{Program1}
\end{equation}

The previous literature has used the Shannon entropy to specify the
information cost, which connects the rational inattention model to the
multinomial logit model. We review these results in the next Section \ref%
{sec:shannon_and_MNL}. Then in Section \ref{sec:genent_RI} we introduce
generalized entropy to the problem. This connects the rational inattention
model to general additive random utility models.

\subsection{Shannon entropy and multinomial logit\label{sec:shannon_and_MNL}}

The key element in the program above is the cost of information. Much of the
previous literature has utilized the mutual (Shannon) information between
payoffs $\mathbf{V}$ and the actions $\mathbf{A}$ to measure the information
costs. Denote the Shannon entropy by $\Omega (\mathbf{q})=-\mathbf{q}\cdot
\log \mathbf{q}$. Denote also the unconditional choice probabilities by $%
p_{i}^{0}=\mathbb{E}p_{i}(\mathbf{V})$ and $\mathbf{p}^{0}=(p_{1}^{0},\dots
,p_{N}^{0})$. Then the mutual (Shannon) information between $\mathbf{V}$ and
$\mathbf{A}$ may be written as%
\begin{eqnarray}
\kappa (\mathbf{p}\left( \cdot \right) ,\mu ) &=&\Omega (\mathbb{E}(\mathbf{p%
}(\mathbf{V})))-\mathbb{E}(\Omega (\mathbf{p}(\mathbf{V})))
\label{MutualEntropy} \\
&=&-\sum_{i=1}^{N}p_{i}^{0}\log p_{i}^{0}+\sum_{\mathbf{v}\in \mathcal{V}%
}\left( \sum_{i=1}^{N}p_{i}(\mathbf{v})\log p_{i}(\mathbf{v})\right) \mu (%
\mathbf{v}).
\end{eqnarray}

Accordingly, we can specify the information cost as $\lambda \kappa (\mathbf{%
p},\mu )$ where $\lambda >0$ is the unit cost of information. As the
distribution of payoffs is unspecified, we may take $\lambda =1$ at no loss
of generality. The choice strategy of the rationally inattentive DM is the
distribution of the action $\mathbf{A}$ conditional on the payoff $\mathbf{V}
$\textbf{\ }that maximizes the expected payoff less the cost of information,
which is the solution to the optimization problem
\begin{equation}
\max_{\mathbf{p}\left( \cdot \right) }\left\{ \sum_{\mathbf{v}\in \mathcal{V}%
}\left( \sum_{i=1}^{N}v_{i}p_{i}(\mathbf{v})\right) \mu (\mathbf{v})-\kappa (%
\mathbf{p}\left( \cdot \right) ,\mu )\right\}  \label{Program1a}
\end{equation}%
subject to
\begin{equation}
p_{i}(\mathbf{v})\geq 0\ \text{for all $i$},\quad\sum_{i=1}^{N}p_{i}(\mathbf{%
v})=1.  \label{constrain2}
\end{equation}%
Solving this, the DM finds conditional choice probabilities
\begin{equation}
p_{i}(\mathbf{v})=\frac{p_{i}^{0}e^{v_{i}}}{\sum_{j=1}^{N}p_{j}^{0}e^{v_{j}}}%
\quad \mbox{for $i=1,\ldots,N$},  \label{Generalized_Logit}
\end{equation}%
that satisfy $p_{i}^{0}=\mathbb{E}p_{i}(\mathbf{V})$. It is an important
feature of the rational inattention model that some $p_{i}^{0}$ may be zero,
in which case the corresponding $p_{i}\left( \mathbf{v}\right) $ are also
zero. Then the rational inattention model implies the formation of a \emph{%
consideration set}, comprising those options that have strictly positive
probability of being chosen (cf. \citet{CaplinDeanLeahy16}).

Under the convention that $\log 0=-\infty $ and $\exp \left( -\infty \right)
=0$, we may rewrite (\ref{Generalized_Logit}) as
\begin{equation*}
p_{i}(\mathbf{v})=\frac{e^{v_{i}+\log p_{i}^{0}}}{\sum_{j=1}^{N}e^{v_{j}+%
\log p_{j}^{0}}}=\frac{e^{\tilde{v}_{i}}}{\sum_{j=1}^{N}e^{\tilde{v}_{j}}},
\end{equation*}%
where $\tilde{v}_{i}=v_{i}+\log p_{i}^{0}$. This may be recognized as a
multinomial logit model in which the payoff vector $\mathbf{\tilde{v}}$ is $%
\mathbf{v}$ shifted by $\log \mathbf{p}^{0}$. For options that are not in
the consideration set, the shifted payoff is $\tilde{v}_{i}=-\infty $. From
the perspective of the multinomial logit model these options have zero
probability of maximizing the random utility (\ref{random utility}) and they
have effectively been eliminated from the model.%
%

\subsection{The Generalized Entropy Rational Inattention (GERI) model\label%
{sec:genent_RI}}

In this paper we generalize the preceding equivalence result between
rational inattention and multinomial logit. To achieve that, we replace the
Shannon entropy by the generalized entropy introduced in Section \ref{sec:dc}
above. Since each generalized entropy implies a corresponding discrete
choice model (Proposition 2), it turns out that each RI model with an
information cost derived from a generalized entropy will generate choice
probabilities consistent with a corresponding discrete choice model
(Proposition 4 below); this implies that \emph{any} additive random utility
discrete choice model can be microfounded by a corresponding rational
inattention model, thus generalizing substantially the results in the
previous section.

We begin by generalizing the rational inattention framework described above,
using generalized entropy in place of the Shannon entropy. Specifically, we
let $\mathbf{S}$ be the entropy generator corresponding to some additive
random utility model satisfying Assumptions \ref{RUM_conditions} and \ref%
{RUM_regularity}, and define $\Omega _{\mathbf{S}}\left( \mathbf{p}\right) =-%
\mathbf{p}\cdot \log \mathbf{S}\left( \mathbf{p}\right) $ as the
corresponding generalized entropy. We define accordingly a general
information cost by
\begin{eqnarray}
\kappa _{\mathbf{S}}\left( \mathbf{p}\left( \cdot \right) ,\mathbf{\mu }%
\right) &=&\Omega _{\mathbf{S}}\left( \mathbb{E}\mathbf{p}(\mathbf{V}%
)\right) -\mathbb{E}\Omega _{\mathbf{S}}\left( \mathbf{p}(\mathbf{V})\right)
\label{GeneralizedEntropy} \\
&=&-\mathbf{p}^{0}\cdot \log \mathbf{S}\left( \mathbf{p}^{0}\right)
+\sum\limits_{\mathbf{v}\in \mathcal{V}}\left[ \mathbf{p}\left( \mathbf{v}%
\right) \cdot \log \mathbf{S}\left( \mathbf{p}\left( \mathbf{v}\right)
\right) \right] \mu \left( \mathbf{v}\right) .  \notag
\end{eqnarray}

A \emph{Generalized Entropy Rational Inattention}{\ (GERI)} model describes
a DM who chooses the collection of conditional probabilities $\mathbf{p}%
\left( \cdot \right) =\{\mathbf{p}(\mathbf{v})\}_{\mathbf{v}\in \mathcal{V}}$
to maximize his expected payoff less the general information cost%
\begin{equation}
\max_{\mathbf{p}\left( \cdot \right) }\sum_{\mathbf{v}\in \mathcal{V}}\left(
\sum_{i=1}^{N}v_{i}p_{i}(\mathbf{v})\right) \mu (\mathbf{v})-\kappa _{%
\mathbf{S}}(\mathbf{p}\left( \cdot \right) ,\mu ).  \label{GERI_program}
\end{equation}

The following proposition characterizes the optimal solution to the GERI
model.

\begin{proposition}
\label{GERI_Solution} The solution to the GERI\ model:

\begin{description}
\item[(i)] The unconditional probabilities satisfy the fixed point equation%
\begin{equation}
\mathbf{p}^{0}=\mathbb{E}\left( \frac{\mathbf{H}\left( e^{\mathbf{V}+\log
\mathbf{S}\left( \mathbf{p}^{0}\right) }\right) }{\sum_{j=1}^{N}H_{j}\left(
e^{\mathbf{V}+\log \mathbf{S}\left( \mathbf{p}^{0}\right) }\right) }\right) .
\label{GERI_fixedpoint}
\end{equation}

\item[(ii)] The conditional probabilities are given in terms of the
unconditional probabilities by%
\begin{equation}
p_{i}\left( \mathbf{v}\right) =\frac{H_{i}\left( e^{\mathbf{v}+\log \mathbf{S%
}\left( \mathbf{p}^{0}\right) }\right) }{\sum_{j=1}^{N}H_{j}\left( e^{%
\mathbf{v}+\log \mathbf{S}\left( \mathbf{p}^{0}\right) }\right) }.
\label{GERI_choices}
\end{equation}

\item[(iii)] The optimized value of (\ref{GERI_program}) is
\begin{equation*}
\mathbb{E}\log \sum_{j=1}^{N}H_{j}\left( e^{\mathbf{V}+\log \mathbf{S}\left(
\mathbf{p}^{0}\right) }\right) =\mathbb{E}W\left( \mathbf{V}+\log \mathbf{S}%
\left( \mathbf{p}^{0}\right) \right) .
\end{equation*}
\end{description}
\end{proposition}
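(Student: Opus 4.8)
The plan is to attack the constrained maximization in (\ref{GERI_program}) directly through its first-order conditions, exploiting the differential identity for generalized entropy recorded in Proposition~\ref{Generator}(iii). Write the objective as $\mathcal{L}(\mathbf{p}(\cdot)) = \sum_{\mathbf{v}}(\mathbf{v}\cdot\mathbf{p}(\mathbf{v}))\mu(\mathbf{v}) + \mathbf{p}^{0}\cdot\log\mathbf{S}(\mathbf{p}^{0}) - \sum_{\mathbf{v}}(\mathbf{p}(\mathbf{v})\cdot\log\mathbf{S}(\mathbf{p}(\mathbf{v})))\mu(\mathbf{v})$, where $\mathbf{p}^{0}=\sum_{\mathbf{v}}\mathbf{p}(\mathbf{v})\mu(\mathbf{v})$; the only constraints are that each $\mathbf{p}(\mathbf{v})$ lie in $\Delta$. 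The crucial preliminary step is to differentiate the map $G(\mathbf{q})=\mathbf{q}\cdot\log\mathbf{S}(\mathbf{q})$: by the product rule together with Proposition~\ref{Generator}(iii), $\partial G/\partial q_{j}=\log S_{j}(\mathbf{q})+\sum_{k}q_{k}\,\partial\log S_{k}(\mathbf{q})/\partial q_{j}=\log S_{j}(\mathbf{q})+1$. This single identity is what makes the rest of the computation collapse.

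Next I would form the Lagrangian with a multiplier $\eta(\mathbf{v})$ for each normalization $\sum_{i}p_{i}(\mathbf{v})=1$ and differentiate in $p_{i}(\mathbf{v})$, taking care that $\mathbf{p}^{0}$ depends on every $\mathbf{p}(\mathbf{v})$ with $\partial p_{k}^{0}/\partial p_{i}(\mathbf{v})=\delta_{ik}\mu(\mathbf{v})$. Using the identity above both for the $\mathbf{p}^{0}$ term (through the chain rule, which brings in a factor $\mu(\mathbf{v})$) and for the $\mathbf{p}(\mathbf{v})$ term, the two ``$+1$'' contributions cancel and, after dividing by $\mu(\mathbf{v})$, the interior stationarity condition reads $\log S_{i}(\mathbf{p}(\mathbf{v}))=v_{i}+\log S_{i}(\mathbf{p}^{0})-\tilde{\eta}(\mathbf{v})$ with $\tilde{\eta}(\mathbf{v})=\eta(\mathbf{v})/\mu(\mathbf{v})$, i.e. $\mathbf{S}(\mathbf{p}(\mathbf{v}))=e^{-\tilde{\eta}(\mathbf{v})}e^{\mathbf{v}+\log\mathbf{S}(\mathbf{p}^{0})}$. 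Applying $\mathbf{H}=\mathbf{S}^{-1}$ and using that $\mathbf{H}$ is homogeneous of degree one (as the inverse of the degree-one homogeneous map $\mathbf{S}$, Proposition~\ref{Generator}(i)) lets me pull the scalar out: $\mathbf{p}(\mathbf{v})=e^{-\tilde{\eta}(\mathbf{v})}\mathbf{H}(e^{\mathbf{v}+\log\mathbf{S}(\mathbf{p}^{0})})$. Imposing $\sum_{i}p_{i}(\mathbf{v})=1$ then pins down $e^{-\tilde{\eta}(\mathbf{v})}=1/\sum_{j}H_{j}(e^{\mathbf{v}+\log\mathbf{S}(\mathbf{p}^{0})})$, which is precisely part (ii); substituting this into $\mathbf{p}^{0}=\mathbb{E}\,\mathbf{p}(\mathbf{V})$ yields the fixed-point equation (\ref{GERI_fixedpoint}) of part (i).

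For part (iii) I would substitute the stationarity condition back into $\mathcal{L}$. Writing $\mathbf{p}(\mathbf{v})\cdot\log\mathbf{S}(\mathbf{p}(\mathbf{v}))=\mathbf{p}(\mathbf{v})\cdot\mathbf{v}+\mathbf{p}(\mathbf{v})\cdot\log\mathbf{S}(\mathbf{p}^{0})-\tilde{\eta}(\mathbf{v})$ (using $\sum_{i}p_{i}(\mathbf{v})=1$), the payoff terms $\sum_{\mathbf{v}}(\mathbf{p}(\mathbf{v})\cdot\mathbf{v})\mu(\mathbf{v})$ cancel, and the two $\log\mathbf{S}(\mathbf{p}^{0})$ terms cancel because $\sum_{\mathbf{v}}\mathbf{p}(\mathbf{v})\mu(\mathbf{v})=\mathbf{p}^{0}$. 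What survives is $\mathcal{L}=\sum_{\mathbf{v}}\tilde{\eta}(\mathbf{v})\mu(\mathbf{v})=\mathbb{E}\log\sum_{j}H_{j}(e^{\mathbf{V}+\log\mathbf{S}(\mathbf{p}^{0})})$, and Proposition~\ref{General_Entropy}(i) rewrites the summand as $W(\mathbf{V}+\log\mathbf{S}(\mathbf{p}^{0}))$, delivering the stated value.

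The main obstacle is not the algebra --- which Proposition~\ref{Generator}(iii) renders almost mechanical --- but justifying that these stationarity conditions locate the \emph{global} maximum and handling corners. For sufficiency I would argue that the program is concave: the payoff is linear, and $\kappa_{\mathbf{S}}$ equals the generalized Jensen gap $\mathbb{E}[G(\mathbf{p}(\mathbf{V}))]-G(\mathbb{E}\,\mathbf{p}(\mathbf{V}))$ of the convex function $G$ (convexity from Proposition~\ref{Generator}(ii)). Establishing that this gap is convex in $\mathbf{p}(\cdot)$ --- the generalized-entropy analogue of the convexity of mutual information in the channel, which does not follow merely from convexity of $G$ --- is the delicate point, most cleanly handled through the conjugate representation $G=W^{\ast}$ of Proposition~\ref{General_Entropy}. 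Finally, because some $p_{i}(\mathbf{v})$ may vanish (the consideration-set phenomenon), the interior argument must be supplemented with the full Karush--Kuhn--Tucker conditions, and Assumption~\ref{RUM_regularity} --- tying $S_{i}(\mathbf{q})=0$ to $q_{i}=0$ --- is what guarantees that the closed-form expressions in (ii) remain valid on the boundary of $\Delta$.
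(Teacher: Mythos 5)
Your proposal is correct and follows essentially the same route as the paper's proof: a Lagrangian for the normalization (and implicitly the nonnegativity) constraints, the identity $\sum_k q_k\,\partial\log S_k(\mathbf{q})/\partial q_j=1$ from Proposition~\ref{Generator}(iii) to make the $\log\mathbf{S}$ derivative terms cancel, inversion via $\mathbf{H}=\mathbf{S}^{-1}$ with degree-one homogeneity to extract the multiplier, and back-substitution to obtain the optimized value $\mathbb{E}\log\sum_j H_j(e^{\mathbf{V}+\log\mathbf{S}(\mathbf{p}^0)})=\mathbb{E}W(\cdot)$ via Proposition~\ref{General_Entropy}(i). Your closing remarks on global optimality and boundary cases flag genuine subtleties that the paper itself only partially addresses (it introduces and then discards the nonnegativity multipliers $\xi_j$), but they do not change the substance of the argument.
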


Part (i) of the proposition shows that the solution of the GERI model
involves a fixed point problem; in what follows, we assume that a solution
exists. Part (iii) illustrates the close connection between convex analysis
and the GERI problem. To see this, note that the GERI information cost
function may be written as
\begin{equation}
\kappa _{\mathbf{S}}(\mathbf{p}(\cdot),\mu )=-W^{\ast }(\mathbf{p}^{0})+%
\mathbb{E}W^{\ast }(\mathbf{p}(\mathbf{V})).  \label{Kconjugate}
\end{equation}%
Hence, given $\mathbf{p}^{0}$, the conditional choice probabilities $\mathbf{%
p}(\mathbf{v})$
can be generated, for each $\mathbf{v}\in \mathcal{V}$, by the problem
\begin{equation}
\max_{\mathbf{p}\left( \mathbf{v}\right) \in \Delta }\left\{ \mathbf{p}(%
\mathbf{v})\cdot (\mathbf{v}+\log \mathbf{S}(\mathbf{p}^{0}))-W^{\ast }(%
\mathbf{p}(\mathbf{v}))\right\} ,  \label{ratconjugate}
\end{equation}%
the optimized value of which, by Proposition~\ref{General_Entropy}(iii), is
\begin{equation}
W(\mathbf{v}+\log \mathbf{S}(\mathbf{p}^{0})),\quad \text{for each $\mathbf{v%
}\in \mathcal{V}$}  \label{program3}
\end{equation}%
corresponding to Proposition~\ref{GERI_Solution}(iii).

It is worth remarking that some of the optimal unconditional choice
probabilities may be zero. For these options, the corresponding conditional
choice probabilities will also be zero for all $\mathbf{v}$.\footnote{%
To see this, consider the solution to the GERI problem given in Eq. (\ref%
{GERI_choices}) and define $\mathbf{\tilde{v}}=\mathbf{v}+\log\mathbf{S}(%
\mathbf{p}^0)$. Let $p_i^0=0$. Then by assumption \ref{RUM_regularity} it
follows that $\log S_i(\mathbf{p}^0)=-\infty$, or equivalently, $\tilde{v}%
_i\longrightarrow -\infty$ and hence $p_i(\mathbf{v})=0$ for all $\mathbf{v}%
\in \mathcal{V}$.} The rational inattention model then also describes the
formation of consideration sets, i.e. the set of options that are chosen
with positive probability.\footnote{%
Because of the possibility of zero choice probabilities for some options,
GERI models can also generate failures of the ``regularity'' property
(adding an option to a choice set cannot increase the choice probability for
any of the original choices). See section~\ref{nonregularity} in the
Appendix for an example.}

While Proposition \ref{GERI_Solution} does not characterize explicitly the
optimal consideration set emerging from a GERI\ model, the following
corollary describes one important feature that it has, namely that it
excludes options that offer the lowest utility in all states of the world.

\begin{corollary}
\label{thm:consideration} For some option $a$, and for all $\mathbf{v}\in
\mathcal{V}$, let $v_{a}\leq v_{i}$ for all $i\neq a$, and assume that the
inequality is strict with positive probability. Then $p_{a}^{0}=0$ (that is,
option $a$ is not in the optimal consideration set).
\end{corollary}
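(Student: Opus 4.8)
The plan is to argue by contradiction: assume option $a$ lies in the optimal consideration set, i.e.\ $p_a^0>0$, and contradict the assumed payoff dominance. The first step is to translate the claim into the language of the underlying random utility model. Writing $\mathbf{y}=\mathbf{S}(\mathbf{p}^0)$, Assumption~\ref{RUM_regularity} gives $y_a=S_a(\mathbf{p}^0)>0$, so $\log y_a$ is finite. By (\ref{fdp1}) the conditional probabilities in (\ref{GERI_choices}) are just the random utility choice probabilities at shifted payoffs, $p_i(\mathbf{v})=q_i(\mathbf{v}+\log\mathbf{y})$. Moreover, since $\mathbf{S}=\mathbf{H}^{-1}$ we have $\mathbf{H}(\mathbf{y})=\mathbf{p}^0$, hence $\sum_j H_j(\mathbf{y})=\sum_j p_j^0=1$; substituting $\tilde{\mathbf v}=\log\mathbf y$ into (\ref{fdp1}) yields the key identity $q_a(\log\mathbf{y})=p_a^0$. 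Combining this with the $a$-component of the fixed-point equation (\ref{GERI_fixedpoint}) gives
\[
q_a(\log\mathbf{y})=p_a^0=\mathbb{E}\,q_a(\mathbf{V}+\log\mathbf{y}).
\]

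The second step bounds the integrand. Since $q_a(\tilde{\mathbf v})=\mathbb{P}(\tilde v_a+\epsilon_a\ge \tilde v_j+\epsilon_j,\ \forall j)$ depends only on the utility gaps and is invariant to adding a constant to all coordinates, I would rewrite $q_a(\mathbf v+\log\mathbf y)=q_a((\mathbf v-v_a\mathbf 1)+\log\mathbf y)$, where $\mathbf 1$ is the all-ones vector. The hypothesis $v_a\le v_j$ makes every coordinate of $\mathbf v-v_a\mathbf 1$ nonnegative with a zero in position $a$; because raising a rival's utility while holding $a$'s fixed can only shrink the event that $a$ is the maximizer, $q_a$ is nonincreasing in each rival coordinate, and therefore $q_a(\mathbf v+\log\mathbf y)\le q_a(\log\mathbf y)=p_a^0$ for every $\mathbf v\in\mathcal V$. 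Feeding this pointwise bound into the displayed identity forces the expectation of a quantity that never exceeds $p_a^0$ to equal $p_a^0$, so $q_a(\mathbf v+\log\mathbf y)=p_a^0$ for $\mu$-almost every $\mathbf v$.

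The final step converts this equality into a contradiction. Viewing $q_a$ as the joint CDF of the gaps $(\epsilon_j-\epsilon_a)_{j\ne a}$, Assumption~\ref{RUM_conditions} (absolute continuity and full support) makes this CDF strictly increasing in each argument that is finite. Hence the equality can hold only if none of the finite gaps strictly decreased, i.e.\ for $\mu$-a.a.\ $\mathbf v$ every option $j$ with $p_j^0>0$ satisfies $v_j=v_a$. Consequently the strict inequality $v_i>v_a$ guaranteed on a positive-probability set can be realized only by an option $i$ outside the consideration set, and every option inside it delivers payoff $v_a$ almost surely, so the optimal gross payoff equals $\mathbb{E}\,v_a$. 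I would then close the argument by exhibiting a strictly better feasible strategy: since $\Omega_{\mathbf S}=-W^{\ast}$ is concave, Jensen gives $\kappa_{\mathbf S}\ge 0$, while the constant strategy ``always choose $i$'' has zero information cost and expected payoff $\mathbb{E}\,v_i>\mathbb{E}\,v_a$, contradicting optimality. The main obstacle is precisely this last point: making the strictness rigorous and ruling out the degenerate configuration in which the dominating option sits outside the consideration set. The convex-analytic identities of the first step are essentially bookkeeping, whereas the strict-monotonicity/full-support input together with the constant-strategy comparison carry the real content of the proof.
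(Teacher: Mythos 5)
Your proof is correct, and although it opens the same way as the paper's---contradiction, the fixed-point equation (\ref{GERI_fixedpoint}), and monotonicity of $q_a$ in rivals' utilities---it closes the argument by a genuinely different mechanism. The paper stays entirely inside the fixed-point identity: it uses (cyclic) monotonicity of $\nabla W$ to get the pointwise bound, asserts that the bound is strict on a positive-measure set, and concludes $p_a^0<p_a^0$ directly in the chain around (\ref{cminequality}). You instead use only the weak bound to force $q_a(\mathbf{v}+\log\mathbf{S}(\mathbf{p}^0))=p_a^0$ for $\mu$-a.e.\ $\mathbf{v}$, then invoke full support (strict monotonicity of the CDF of the utility gaps in its finite arguments) to deduce that every option in the consideration set ties with $a$ almost surely, and finally obtain the contradiction from optimality: the supposed optimum has gross payoff $\mathbb{E}V_a$ and nonnegative information cost (Jensen plus concavity of $\Omega_{\mathbf{S}}$, i.e.\ Proposition~\ref{Generator}(ii)), while the zero-cost constant strategy ``always choose $i$'' yields $\mathbb{E}V_i>\mathbb{E}V_a$. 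The extra step buys something real: the paper's strictness claim implicitly needs some strictly dominating rival $c$ to satisfy $S_c(\mathbf{p}^0)>0$ (otherwise $e^{v_a}S_c(\mathbf{p}^0)=e^{v_c}S_c(\mathbf{p}^0)$ and the key inequality degenerates to an equality), whereas your constant-strategy comparison disposes of exactly that configuration, which you rightly identify as the crux. The only thing worth making explicit is that the coordinatewise monotonicity of $q_a$ you read off the probabilistic representation is the same fact the paper extracts from monotonicity of the gradient of the convex surplus $W$; either justification is acceptable.
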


For the special case of Shannon entropy (when $\mathbf{S}$ is the identity
function), the result can be strengthened even further. Corollary \ref%
{thm:considerationShannon} in the Appendix shows that in that case, an
option that is dominated by another option in all states of the world will
not be in the optimal consideration set.

\subsection{Equivalence between discrete choice and rational inattention
\label{s4}}

We now establish the central result of this paper, namely the equivalence
between additive random utility discrete choice models and rational
inattention models. In particular, we show that the choice probabilities
generated by a GERI model lead to the same choice probabilities as a
corresponding additive random utility model and vice versa.

Combining the choice probabilities $p_{i}(\mathbf{v})$ in (\ref{GERI_choices}%
) from the GERI\ model and the choice probabilities $q_{i}(\tilde{\mathbf{v}}%
)$ in (\ref{fdp1}) from the additive random utility model, we find that if
payoffs are related by
\begin{equation}
\tilde{v}_{i}=v_{i}+\log S_{i}(\mathbf{p}^{0})\quad
\mbox{for
$i=1,\ldots,N$},  \label{perturbed_valuation}
\end{equation}%
then the two models yield the same choice probabilities
\begin{equation*}
p_{i}(\mathbf{v})=\frac{H_{i}(e^{\mathbf{v}+\log \mathbf{S(}\mathbf{p}^{%
\mathbf{0}})})}{\sum_{j=1}^{N}H_{j}(e^{\mathbf{v}+\log \mathbf{S(}\mathbf{p}%
^{\mathbf{0}})})}=\frac{H_{i}(e^{\tilde{\mathbf{v}}})}{%
\sum_{j=1}^{N}H_{j}(e^{\tilde{\mathbf{v}}})}=q_{i}(\tilde{\mathbf{v}}).
\end{equation*}

Given a GERI\ model with payoffs $\mathbf{v}\in \mathcal{V}$ and
unconditional choice probabilities $\mathbf{p}^{0}$, we may then use (\ref%
{perturbed_valuation}) to construct deterministic utility components $\tilde{%
\mathbf{v}}$ for the additive random utility model. If the GERI model has
some zero unconditional choice probabilities $p_{i}^{0}$, then Assumption %
\ref{RUM_regularity} ensures that $p_{i}(\mathbf{v})=0$ if and only if $%
q_{i}(\tilde{\mathbf{v}})=0$. The additive random utility model that
corresponds to the GERI\ model is then an extended additive random utility
model in which some deterministic utility components are minus infinity.

Conversely, given an additive random utility model with flexible generator $%
\mathbf{S}$ and a prior distribution $\tilde{\mu}$ of the deterministic
utility components $\tilde{\mathbf{v}}\in \mathcal{\tilde{V}}$, define $%
\mathbf{p}^{0}=\mathbb{E}\mathbf{q}(\tilde{\mathbf{v}})$ and note that all $%
p_{i}^{0}>0$. Then define $\mathbf{v}$ using (\ref{perturbed_valuation}) and
define similarly $\mu $ and $\mathcal{V\ }$using the same location shift $%
\log \mathbf{S}(\mathbf{p}^{0})$. By the same argument as before, the GERI\
model with payoffs $\mathbf{v}\in \mathcal{V}$, prior $\mu $ and flexible
generator $\mathbf{S}$ for the generalized entropy has the same choice
probabilities as the additive random utility model.

Hence, we have shown the following proposition.

\begin{proposition}
\label{equivalence} For every additive random utility discrete choice model
and every prior distribution on $\mathcal{\tilde{V}}$, there is an
equivalent GERI\ model with a prior distribution on $\mathcal{V}$, where $%
\mathcal{V}$ is equal to $\mathcal{\tilde{V}}$ up to a location shift.

Conversely, every GERI model is equivalent to an additive random utility
discrete choice model in which the utility components for options chosen
with zero probability are set to minus infinity.
\end{proposition}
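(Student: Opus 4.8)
The plan is to lean on the algebraic identity displayed immediately before the statement: under the location shift (\ref{perturbed_valuation}), $\tilde{v}_i = v_i + \log S_i(\mathbf{p}^0)$, the GERI conditional choice probabilities (\ref{GERI_choices}) are literally equal to the ARUM choice probabilities (\ref{fdp1}), because $e^{\mathbf{v} + \log \mathbf{S}(\mathbf{p}^0)} = e^{\tilde{\mathbf{v}}}$ makes the two ratios coincide termwise. This identity carries the entire argument; the remaining work is bookkeeping to check that each model can be reconstructed from the other consistently, in particular that the fixed-point condition (\ref{GERI_fixedpoint}) for $\mathbf{p}^0$ is honored, and that zero choice probabilities are matched correctly via Assumption~\ref{RUM_regularity}.

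For the forward (ARUM to GERI) direction I would take an arbitrary ARUM with generator $\mathbf{S}$ and prior $\tilde{\mu}$ on $\tilde{\mathbf{v}} \in \tilde{\mathcal{V}}$, and set $\mathbf{p}^0 := \mathbb{E}_{\tilde{\mu}}\,\mathbf{q}(\tilde{\mathbf{v}})$. Because $\boldsymbol{\epsilon}$ is fully supported on $\mathbb{R}^N$ (Assumption~\ref{RUM_conditions}), each $q_i(\tilde{\mathbf{v}}) > 0$ and hence each $p_i^0 > 0$, so $\log \mathbf{S}(\mathbf{p}^0)$ is finite and the shift $\mathbf{v} := \tilde{\mathbf{v}} - \log \mathbf{S}(\mathbf{p}^0)$ is well-defined; I would transport the prior to obtain $\mathcal{V}$ and $\mu$ by the same shift. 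The one substantive check is that this $\mathbf{p}^0$ actually solves the GERI fixed point (\ref{GERI_fixedpoint}): applying the termwise identity gives $p_i(\mathbf{v}) = q_i(\tilde{\mathbf{v}})$ for every realization, so $\mathbb{E}_\mu\,\mathbf{p}(\mathbf{V}) = \mathbb{E}_{\tilde{\mu}}\,\mathbf{q}(\tilde{\mathbf{V}}) = \mathbf{p}^0$, which is precisely (\ref{GERI_fixedpoint}). Thus the constructed GERI model is valid and matches the ARUM choice probabilities.

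For the converse (GERI to ARUM) direction I would start from a GERI model whose $\mathbf{p}^0$ solves (\ref{GERI_fixedpoint}) and define $\tilde{\mathbf{v}}$ by the same shift (\ref{perturbed_valuation}). Here the subtlety is that some $p_i^0$ may vanish; for such $i$, Assumption~\ref{RUM_regularity} forces $S_i(\mathbf{p}^0) = 0$, so $\log S_i(\mathbf{p}^0) = -\infty$ and therefore $\tilde{v}_i = -\infty$, yielding exactly the extended ARUM in which zero-probability options carry deterministic utility $-\infty$ and drop out, as in the Shannon/MNL case of Section~\ref{sec:shannon_and_MNL}. For the surviving options ($p_i^0 > 0$) the termwise identity applies directly, and Assumption~\ref{RUM_regularity} also supplies the equivalence $p_i(\mathbf{v}) = 0 \Leftrightarrow q_i(\tilde{\mathbf{v}}) = 0$, so the match extends to the full option set.

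The hard part will be the zero-probability (consideration-set) bookkeeping rather than any new analysis: one must confirm that the $-\infty$ utility components define a legitimate extended ARUM whose probabilities agree with the GERI probabilities on all $N$ options simultaneously, and that Assumption~\ref{RUM_regularity} is precisely what makes the shift (\ref{perturbed_valuation}) coherent under the convention $\log 0 = -\infty$ without corrupting the probability identity. The strictly-positive case is immediate from the displayed identity together with the fixed-point accounting above.
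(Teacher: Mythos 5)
Your proposal is correct and follows essentially the same route as the paper: the termwise identity $p_i(\mathbf{v})=q_i(\tilde{\mathbf{v}})$ under the shift (\ref{perturbed_valuation}), the construction $\mathbf{p}^0=\mathbb{E}\mathbf{q}(\tilde{\mathbf{V}})$ with positivity from full support in the forward direction, and Assumption~\ref{RUM_regularity} to handle the $-\infty$ components in the converse. Your explicit verification that the constructed $\mathbf{p}^0$ satisfies the fixed-point equation (\ref{GERI_fixedpoint}) is a step the paper leaves implicit, and is a welcome addition rather than a deviation.
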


In Section~\ref{nestedlogitexample}, we will apply this proposition to study
a GERI model in which the choice probabilities are equivalent to those from
a nested logit discrete choice model.

\subsection{Additional properties of generalized entropy cost functions}

We have shown that the generalized rational inattention model is always
equivalent to an additive random utility model and conversely that the
generalized rational inattention model may provide a boundedly rational
foundation for any additive random utility model. The key to this result is
the generalization of the information cost function $\kappa _{\mathbf{S}}(%
\mathbf{p}\left( \cdot \right) ,\mu )$ using generalized entropy as defined
in Eq. (\ref{GeneralizedEntropy}). It is then natural to ask whether $\kappa
_{\mathbf{S}}(\mathbf{p}\left( \cdot \right) ,\mu )$ has the properties that
one would desire for an information cost. In this section we show that $%
\kappa _{\mathbf{S}}(\mathbf{p}\left( \cdot \right) ,\mu )$ does indeed
possess two reasonable and desirable properties of cost functions that have
been discussed in the existing literature (cf. \citet{deOliveira2015}, %
\citet{HebertWoodford16}), thus providing normative support for the GERI
framework.

First, when $\mathbf{A}$ and $\mathbf{V}$ are independent, then the action $%
\mathbf{A}$ carries no information about the payoff $\mathbf{V}$. In that
case the information cost should be zero, i.e.

\begin{description}
\item {\bfseries\emph{Independence}}. \emph{If $\mathbf{A}$ and $\mathbf{V}$
are independent, then $\kappa_\mathbf{S}(\mathbf{p}(\cdot),\mu)=0$.}
\end{description}


Second, the mutual Shannon information $\kappa (\mathbf{p}\left( \cdot
\right) ,\mu )$ is a convex function of $\mathbf{p}$. This is useful as it
ensures a unique solution to the problem of the rationally inattentive DM.
We show that the information cost $\kappa _{\mathbf{S}}(\mathbf{p}\left(
\cdot \right) ,\mu )$ has a slightly weaker property, namely that it is
convex on sets where $\mathbb{E}\mathbf{p}(\mathbf{V})$ is constant.

\begin{description}
\item {\bfseries\emph{Convexity}}. \emph{For a given $\mu $, the information
cost function }$\kappa _{\mathbf{S}}(\mathbf{p}\left( \cdot \right) ,\mu )$%
\emph{\ is convex on any set of choice probabilities vectors satisfying ${%
\left\{ \mathbf{p}:\mathcal{V}\mapsto \Delta|\ \mathbb{E}\mathbf{p}(%
\mathbf{V})=\mathbf{\hat{p}}\right\} }$.}
\end{description}

The mutual Shannon information $\kappa (\mathbf{p}\left( \cdot \right) ,\mu
) $ satisfies these two properties. The next proposition establishes that
the information cost defined in (\ref{GeneralizedEntropy}) using the
generalized entropy functions also satisfies these properties.

\begin{proposition}
\label{niceproperties} The information cost defined in Eq. (\ref%
{GeneralizedEntropy}) satisfies the independence and convexity conditions.
\end{proposition}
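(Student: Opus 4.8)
The plan is to treat the two properties separately, since they rest on quite different observations, and in both cases to lean on the conjugate structure already developed in Propositions~\ref{General_Entropy} and~\ref{Generator}.

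For \emph{independence}, I would start from the defining identity (\ref{GeneralizedEntropy}) and note that $\mathbf{A}$ and $\mathbf{V}$ being independent means precisely that the conditional choice probabilities do not vary with the payoff, i.e. $\mathbf{p}(\mathbf{v})=\mathbf{p}^{0}$ for every $\mathbf{v}$ in the support of $\mu$, where $\mathbf{p}^{0}=\mathbb{E}\mathbf{p}(\mathbf{V})$. Substituting this into (\ref{GeneralizedEntropy}), the term $\mathbb{E}\,\Omega_{\mathbf{S}}(\mathbf{p}(\mathbf{V}))$ becomes the expectation of the constant $\Omega_{\mathbf{S}}(\mathbf{p}^{0})$, namely $\Omega_{\mathbf{S}}(\mathbf{p}^{0})$ itself, and this cancels exactly against the leading term $\Omega_{\mathbf{S}}(\mathbb{E}\mathbf{p}(\mathbf{V}))=\Omega_{\mathbf{S}}(\mathbf{p}^{0})$. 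Hence $\kappa_{\mathbf{S}}(\mathbf{p}(\cdot),\mu)=0$, and this part requires nothing beyond direct substitution.

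For \emph{convexity}, my starting point would be the conjugate representation (\ref{Kconjugate}), $\kappa_{\mathbf{S}}(\mathbf{p}(\cdot),\mu)=-W^{\ast}(\mathbf{p}^{0})+\mathbb{E}\,W^{\ast}(\mathbf{p}(\mathbf{V}))$. I would view a strategy $\mathbf{p}(\cdot)$ as a single point in the product of simplices indexed by $\mathbf{v}\in\mathcal{V}$, and restrict attention to the level set $\{\mathbf{p}:\mathbb{E}\mathbf{p}(\mathbf{V})=\hat{\mathbf{p}}\}$. The crucial observation is that on this set the first term $-W^{\ast}(\mathbf{p}^{0})=-W^{\ast}(\hat{\mathbf{p}})$ is a fixed constant, so the entire question reduces to showing that $\mathbf{p}(\cdot)\mapsto \mathbb{E}\,W^{\ast}(\mathbf{p}(\mathbf{V}))=\sum_{\mathbf{v}}W^{\ast}(\mathbf{p}(\mathbf{v}))\,\mu(\mathbf{v})$ is convex. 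That reduction is essentially the whole argument, because $W^{\ast}$ is convex by Proposition~\ref{General_Entropy}(ii) (it is a convex conjugate), equivalently by Proposition~\ref{Generator}(ii), since $W^{\ast}(\mathbf{q})=\mathbf{q}\cdot\log\mathbf{S}(\mathbf{q})$ on $\Delta$. Concretely, I would take two strategies $\mathbf{p}_{0}(\cdot),\mathbf{p}_{1}(\cdot)$ on the level set and their combination $\mathbf{p}_{t}(\cdot)=(1-t)\mathbf{p}_{0}(\cdot)+t\,\mathbf{p}_{1}(\cdot)$; linearity of the expectation keeps $\mathbf{p}_{t}(\cdot)$ on the same level set, and applying the pointwise inequality $W^{\ast}(\mathbf{p}_{t}(\mathbf{v}))\le(1-t)W^{\ast}(\mathbf{p}_{0}(\mathbf{v}))+t\,W^{\ast}(\mathbf{p}_{1}(\mathbf{v}))$ for each $\mathbf{v}$, summing against the nonnegative weights $\mu(\mathbf{v})$, and adding the constant $-W^{\ast}(\hat{\mathbf{p}})$, delivers convexity of $\kappa_{\mathbf{S}}$.

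The main thing to get right — and the reason the statement is confined to level sets of $\mathbb{E}\mathbf{p}(\mathbf{V})$ rather than asserting global convexity — is the role of the leading term. Off the level set, $-W^{\ast}(\mathbb{E}\mathbf{p}(\mathbf{V}))$ is a \emph{concave} function of $\mathbf{p}(\cdot)$ (a concave function precomposed with the linear map $\mathbf{p}(\cdot)\mapsto\mathbb{E}\mathbf{p}(\mathbf{V})$), which can offset the convexity of the second term; fixing $\mathbb{E}\mathbf{p}(\mathbf{V})=\hat{\mathbf{p}}$ neutralizes it, and it is worth flagging explicitly that this is what forces the weaker-than-global conclusion. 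The only technical point I would verify is the boundary of $\Delta$, where some $p_{i}=0$: under the convention $0\log 0=0$ the conjugate $W^{\ast}$ remains a closed convex function on all of $\Delta$, so the pointwise convexity inequality continues to hold there and no separate treatment of zero choice probabilities is needed.
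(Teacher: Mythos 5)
Your proof is correct and takes essentially the same route as the paper's: independence follows by direct substitution of $\mathbf{p}(\mathbf{v})=\mathbf{p}^{0}$, and convexity follows because on the level set $\{\mathbf{p}:\mathbb{E}\mathbf{p}(\mathbf{V})=\hat{\mathbf{p}}\}$ the leading term is constant and the remaining $\mu$-weighted sum is convex pointwise, using the convexity of $W^{\ast}(\mathbf{q})=\mathbf{q}\cdot\log\mathbf{S}(\mathbf{q})$ from Proposition~\ref{Generator}(ii) (the paper phrases this equivalently as concavity of $\Omega_{\mathbf{S}}=-W^{\ast}$). Your additional remarks on the boundary of $\Delta$ and on why the level-set restriction is needed are accurate glosses rather than a different argument.
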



\section{Example: The nested logit GERI model}

\label{nestedlogitexample}

From an applied point of view, an important implication of Proposition \ref%
{equivalence} is that it allows us to formulate rational inattention models
that have complex substitution patterns, beyond the multinomial logit case.
In this example, we consider a GERI model with an information cost function
derived from a nested logit discrete choice model. The nested logit choice
probabilities are consistent with a discrete choice model in which the
utility shocks $\mathbf{\epsilon }$ are jointly distributed in the class of
generalized extreme value distributions. Among applied researchers, the
nested logit model is often preferred over the multinomial logit model
because it allows some products to be closer substitutes than others, thus
avoiding the \textquotedblleft red bus/blue bus\textquotedblright\ criticism.%
\footnote{%
See, for instance, \citet[Chap. 2]{Maddalabook}, and %
\citet{AndersondePalmaThisse1992}.}

We partition the set of options $i\in \left\{ 1,\ldots ,N\right\} $ into
mutually exclusive nests, and let $g_{i}$ denote the nest containing option $%
i$. Let $\zeta _{g_{i}}\in (0,1]$ be nest-specific parameters. For a
valuation vector $\mathbf{\tilde{v}}$, the nested logit choice probabilities
are given by:
\begin{equation}
q_{i}(\mathbf{\tilde{v}})=\frac{e^{\tilde{v}_{i}/\zeta _{g_{i}}}}{\sum_{j\in
g_{i}}e^{\tilde{v}_{j}/\zeta _{g_{i}}}}\cdot \frac{e^{\zeta _{g_{i}}\log
\left( \sum_{j\in g_{i}}e^{\tilde{v}_{j}/\zeta _{g_{i}}}\right) }}{\sum_{%
\text{all nests $g$}}e^{\zeta _{g}\log \left( \sum_{j\in g}e^{\tilde{v}%
_{j}/\zeta _{g}}\right) }}.  \label{nested logit}
\end{equation}

The $\mathbf{S}$ function corresponding to a nested logit model is
\begin{equation}  \label{S nested logit}
S_i(\mathbf{q})=q_i^{\zeta_{g_i}} \left(\sum_{j\in g_i}
q_j\right)^{1-\zeta_{g_i}}
\end{equation}
Using this, and applying Proposition \ref{equivalence}, the nested logit
choice probabilities (\ref{nested logit}) are also equivalent to those from
a GERI model with valuations
\begin{equation}
{v}_{i}=\tilde{v}_{i}-\zeta _{g_{i}}\log p_{i}^{0}-(1-\zeta _{g_{i}})\log
\left( \sum_{j\in g_{i}}p_{j}^{0}\right) ,\quad i\in \left\{ 1,\ldots
,n\right\} .  \label{nested-v}
\end{equation}%
The $\mathbf{S}$ function for the nested logit model in Eq. (\ref{S nested
logit}) has several interesting features, relative to the Shannon entropy.
First, Eq. (\ref{S nested logit}) allows us to write the generalized entropy
$\Omega _{\mathbf{S}}(\mathbf{p})$ as%
\begin{equation}
\Omega_{\mathbf{S}} (\mathbf{p})=-\sum_{i=1}^{N}\zeta _{g_{i}}p_{i}\log
p_{i}-\sum_{i=1}^{N}(1-\zeta _{g_{i}})p_{i}\log \left( \sum_{j\in
g_{i}}p_{j}\right) .  \label{Entropy nested logit}
\end{equation}%
The first term in Eq (\ref{Entropy nested logit}) captures the Shannon
entropy within nests, whereas the second term captures the information
between nests. According to this, we may interpret Eq. (\ref{Entropy nested
logit}) as an \textquotedblleft augmented\textquotedblright\ version of
Shannon entropy.

Second, 
when the nesting parameter $\zeta _{g_{j}}=1$, then $\mathbf{S}$ is the
identity function ($S_{j}(\mathbf{p})=p_{j}$ for all $j$), corresponding to
the Shannon entropy. When $\zeta _{g_{j}}<1$, then $S_{j}(\mathbf{p})\geq
p_{j}$; here, $\mathbf{S}(\mathbf{p})$ behaves as a probability weighting
function which tends to overweight options $j$ belonging to larger nests. At
the extreme $\zeta _{g_{j}}\rightarrow 0$, all options within the same nest
effectively collapse into one aggregate option and become perfect
substitutes.

From the discrete choice perspective, nested logit choice probabilities
allow for correlation in the utility shocks ($\epsilon $'s) corresponding to
the different choice options. Analogously, in an information cost function
constructed from the nested logit $\mathbf{S}$ function in Eq. (\ref{S
nested logit}), there will be a common cost component across all options
belonging to the same nest, corresponding to the term $(\sum_{j\in
g_{j}}p_{j})^{1-\zeta _{g_{j}}}$ which is common to all $S_{j}(\mathbf{p})$
for $j\in g_{j}$. From an information processing perspective, this suggests
that there are spillovers in gathering information for options in the same
nest. Information spillovers across choices arise in many decision
environments. For example, a supermarket shopper gains information about common
features of the vegetables, such as the average freshness and quality, while
looking at any of them. In animal foraging, animals who have
information about presence of predators in one grazing site also use that
information to update about predator presence at other nearby sites.

For the Shannon entropy, in contrast, these common terms do not exist, so
that there are no spillovers across options in information processing. From
a behavioral point of view, then, more correlated utility shocks makes each
option's signal harder to distinguish -- there is more redundant information
-- implying that multinomial logit choice probabilities, which would ignore
this correlation, manifest a type of correlation neglect.

To illustrate this point, we compute a GERI model using the nested-logit
cost function. (This requires solving the fixed point equation (\ref%
{GERI_fixedpoint}).) In this example, there are five options, in which the
valuations $\mathbf{v}=(v_{1},v_{2},\ldots ,v_{5})^{\prime }$ are drawn
i.i.d. uniformly from the unit interval. We assume that options (1,2,3) are
in one nest, and options (4,5) are in a second nest. With this
specification, all five options are \emph{a priori} identical, and have
equal probability of being the option with the highest valuation. Hence, we
might expect that any non-uniform choice probabilities should reflect
underlying asymmetries in the information cost function.

In Table \ref{nestedlogittable}, we report the average choice probability
for each option according for several specifications of the nested logit
cost function. In the top panel, we set $\zeta _{1}=\zeta _{2}=1$,
corresponding to the multinomial logit model. In the bottom panel, we set $%
\zeta _{1}=\zeta _{2}=0.5$.

\begin{table}[htbp]
\begin{center}
\begin{tabular}{|l||ccccc|}
\hline
Choice probs: & Option 1 & Option 2 & Option 3 & Option 4 & Option 5 \\
\hline
& \multicolumn{5}{c|}{\emph{Multinomial logit: $\zeta_1=1,\ \zeta_2=1$}} \\
Avg: & 0.200 & 0.200 & 0.200 & 0.200 & 0.200 \\
Median: & 0.194 & 0.194 & 0.194 & 0.194 & 0.194 \\
Std: & 0.060 & 0.060 & 0.060 & 0.060 & 0.060 \\ \hline
Overall efficiency: & \multicolumn{5}{c|}{Pr(\text{Choosing the best option}%
) = 0.283} \\
&  &  &  &  &  \\ \hline
& \multicolumn{5}{c|}{\emph{Nested logit: $\zeta_1=0.5,\ \zeta_2=0.5$}} \\
Avg: & 0.221 & 0.221 & 0.221 & 0.169 & 0.169 \\
Median: & 0.200 & 0.200 & 0.200 & 0.157 & 0.157 \\
Std: & 0.116 & 0.116 & 0.116 & 0.081 & 0.081 \\ \hline
Overall efficiency: & \multicolumn{5}{c|}{Pr(\text{Choosing the best option}%
) = 0.355} \\ \hline
\end{tabular}%
\end{center}
\caption{Choice Probabilities in GERI model: Nested Logit vs. Multinomial
Logit}
\label{nestedlogittable}
\end{table}

As we expect, we see that the average choice probabilities are identically
equal to 0.2 across all five options in the multinomial logit case. As we
remarked before, this reflects the feature of the Shannon-based information
cost function ($S_{i}(\mathbf{p})=p_{i}$) in which information costs are
separable across all five choices.\footnote{%
In the nested logit case, we obtained the unconditional distribution by
iterating over the fixed point relation $\mathbf{p}^{0}=\mathbb{E}\mathbf{p}(%
\mathbf{V})$, starting from the multinomial logit distribution.}
Unlike the multinomial logit case, we see that choice probabilities are
higher for the options 1,2 and 3, which constitute the larger nest, and
smaller for options 4,5 which constitute the smaller nest. (However, within
nest, the choice probabilities are identical.) The non-uniform choice
probabilities for the nested logit model reflect the cost spillovers across
options in the structure of the nested logit information cost function.

Moreover, the performance of the two models is surprisingly different. Under
the multinomial logit specification, the overall efficiency -- defined as
the average probability of choosing the option with the highest valuation --
is 28\%. The overall efficiency for the nested logit, however, is higher,
being over 35\%.

This simple example demonstrates the substantive importance of allowing for
information cost functions beyond the Shannon entropy, which leads to
multinomial logit choice probabilities. Obviously, it makes a difference for
a DM to be processing information using the nested logit cost function, as
compared to the Shannon cost function, as the highest valuation option is
chosen with higher probability on average using the nested logit cost
function. 

\section{Summary}

The central result in this paper is the observational equivalence between a
random utility discrete choice model and a corresponding Generalized Entropy
Rational Inattention (GERI) model. Thus the choice probabilities of any
additive random utility discrete choice model can be viewed as emerging from
rationally inattentive behavior, and vice-versa; we can go back and forth
between the two paradigms.\footnote{%
In a similar vein, \citet{Webb16} demonstrates an equivalence between random
utility models and bounded-accumulation or drift-diffusion models of choice
and reaction times used in the neuroeconomics and psychology literature.}
Then, in order to apply an additive random utility discrete choice model, it
is no longer necessary to assume that decision makers are completely aware
of the valuations of all the available options. This is important, as it is
clearly unrealistic to expect that decision makers to be aware of all
options in a large set of options.


The underlying idea is that, by exploiting convex analytic properties of the
discrete choice model, we show a \textquotedblleft
duality\textquotedblright\ between the discrete choice and GERI models in
the sense of convex conjugacy. Precisely, the surplus function of a discrete
choice model has a convex conjugate that is a generalized entropy.
Succinctly, then, GERI models are rational inattention problems in which the
information cost functions are built from the convex conjugate functions of
some additive random utility discrete choice model.

A few remarks are in order. First, the equivalence result in this paper is
at the individual level, hence it also holds for additive random utility
models with random parameters, including the mixed logit or random
coefficient logit models which have been popular in applied work.\footnote{%
See, for instance, \citet{BLP1995}, \citet{McFaddenTrain00}, %
\citet{FoxEtAl12}.} Any mixed discrete choice model such as these is
observationally equivalent to a mixed GERI model.

In addition, there is also a connection between the results here and papers
in the decision theory literature. The GERI optimization problem (\ref%
{GERI_program}) bears resemblance to the variational preferences that %
\citet{MaccheroniEtal06} propose to represent ambiguity averse preferences,
as well as to the revealed perturbed utility paradigm proposed by %
\citet{FudenbergEtAl16} to model stochastic choice behavior.
\citet{GulNatenzonPesendorfer14} shows an equivalence between random utility
and an \textquotedblleft attribute rule\textquotedblright\ model of
stochastic choice. The main point in this paper is to establish a duality
between rational inattention models and random utility discrete choice
models, which results in observational equivalence of their choice
probabilities. A similar duality might arise between random utility discrete
choice models and these other models from decision theory.

Finally, there are rational inattention models outside the GERI framework;
that is, rational inattention models with information cost functions outside
the class of generalized entropy functions introduced in this paper.\footnote{
As an example, the function
$g(\mathbf{p})=-\sum_{i=1}^N \log(p_i)$ 
is not a generalized entropy function; thus a rational inattention model using this as an information cost function would lie outside the GERI framework.}
Obviously, choice probabilities from these non-GERI models would not be
equivalent to those which can be generated from random utility
discrete-choice models; it will be interesting to examine the empirical
distinctions that non-GERI choice probabilities would have.

\newpage
\phantomsection
\addcontentsline{toc}{section}{References} \singlespacing
\setlength{\bibsep}{0.1pt}

\bigskip
\appendix\singlespace

\section{Proofs and additional results}

\textbf{Notation.} Vectors are denoted simply as $\mathbf{q}=\left(
q_{1},...,q_{N}\right) $. A univariate function applied to a vector is
understood as coordinate-wise application of the function, e.g., $e^{\mathbf{%
q}}=\left( e^{q_{1}},...,e^{q_{N}}\right) $. Consequently, if $a$ is a real
number then $a+\mathbf{q}=\left( a+q_{1},...,a+q_{J}\right) $.
The gradient with respect to a vector $\tilde{\mathbf{v}}$ is $\nabla _{%
\mathbf{\tilde{v}}}$; e.g., for $\tilde{\mathbf{v}}=\left(
v_{1},...,v_{N}\right) $, $\nabla _{\mathbf{\tilde{v}}}W\left( \tilde{%
\mathbf{v}}\right) =\left( \frac{\partial W\left( \mathbf{\tilde{v}}\right)
}{\partial \tilde{v}_{1}},...,\frac{\partial W\left( \mathbf{\tilde{v}}%
\right) }{\partial \tilde{v}_{N}}\right) $. The Jacobian is denoted $J$
with, for example,
\begin{equation*}
J_{\log \mathbf{S}}\left(\mathbf{q}\right) =\left(
\begin{array}{ccc}
\frac{\partial \log S_{1}(\mathbf{q})}{\partial q_{1}} & ... & \frac{%
\partial \log S_{1 }(\mathbf{q})}{\partial q_{N}} \\
... & ... & ... \\
\frac{\partial \log S_{N }(\mathbf{q})}{\partial q_{1}} & ... & \frac{%
\partial \log S_{ N}(\mathbf{q})}{\partial q_{N}}%
\end{array}%
\right) .
\end{equation*}%
A dot indicates an inner product or products of vectors and matrixes. For a
vector $\mathbf{q}$, we use the shorthand $\mathbf{1}\cdot \mathbf{q}=\sum_i
q_i$. The unit simplex in $\mathbb{R}^{N}$ is $\Delta $.

\label{Proofs}

\bigskip

\begin{proof}[Proof of proposition \protect\ref{General_Entropy}]
We first evaluate $W^{\ast }\left( \mathbf{q}\right) $. If $\mathbf{1}\cdot
\mathbf{q}\neq 1$, then
\begin{equation*}
\mathbf{q}\cdot \left(\tilde{\mathbf{v}}+\gamma \right) -W\left( \tilde{%
\mathbf{v}}+\gamma \right) =\mathbf{q}\cdot \tilde{\mathbf{v}}-W\left(
\tilde{\mathbf{v}}\right) +\left( \mathbf{1}\cdot \mathbf{q}-1\right) \gamma
,
\end{equation*}%
which can be made arbitrarily large by changing $\gamma $ and hence $W^{\ast
}\left( \mathbf{q}\right) =\infty $. Next consider $\mathbf{q}$ with some $%
q_{j}<0$. $W\left( \tilde{\mathbf{v}}\right) $ decreases towards a lower
bound
as $v_{j}\rightarrow -\infty
$. Then $\mathbf{q}\cdot \tilde{\mathbf{v}}-W\left( \tilde{\mathbf{v}}%
\right) $ increases towards $+\infty $ and hence $W^{\ast }$ is $+\infty $
outside the unit simplex $\Delta $.

For $\mathbf{q}\in \Delta $, we solve the maximization problem
\begin{equation}
W^{\ast }(\mathbf{q})=\sup_{\tilde{\mathbf{v}}}\{\mathbf{q}\cdot \tilde{%
\mathbf{v}}-W(\tilde{\mathbf{v}})\}.  \label{G*}
\end{equation}%
Note that for any constant $k$ we have $W(\tilde{\mathbf{v}}+k\cdot \mathbf{1%
})=k+W(\tilde{\mathbf{v}})$, so that we normalize $\mathbf{1}\cdot \tilde{%
\mathbf{v}}=0$. Maximize then the Lagrangian $\mathbf{q}\cdot \tilde{\mathbf{%
v}}-W\left( \tilde{\mathbf{v}}\right) -\lambda \left( \mathbf{1}\cdot \tilde{%
\mathbf{v}}\right) $ with first-order conditions $0=q_{j}-\frac{\partial
W\left( \tilde{\mathbf{v}}\right)}{\partial\tilde{v}_j} -\lambda $, which
lead to $\lambda =0$. Then%
\begin{eqnarray*}
\mathbf{q} &=&\nabla _{\tilde{\mathbf{v}}}W\left( \tilde{\mathbf{v}}\right)
\Leftrightarrow \\
\mathbf{q}e^{W\left( \tilde{\mathbf{v}}\right) } &=&\nabla _{\tilde{\mathbf{v%
}}}\left( e^{W\left( \tilde{\mathbf{v}}\right) }\right) =\mathbf{H}\left( e^{%
\tilde{\mathbf{v}}}\right) \Leftrightarrow \\
\mathbf{S}\left( \mathbf{q}\right) e^{W\left( \tilde{\mathbf{v}}\right) }
&=&e^{\tilde{\mathbf{v}}}\Leftrightarrow \\
\log \mathbf{S}\left( \mathbf{q}\right) +W\left( \tilde{\mathbf{v}}\right)
&=&\tilde{\mathbf{v}}\Rightarrow \\
\mathbf{q}\cdot \log \mathbf{S}\left( \mathbf{q}\right) +W\left( \tilde{%
\mathbf{v}}\right) &=&\mathbf{q}\cdot \tilde{\mathbf{v}}.
\end{eqnarray*}%
Inserting this into (\ref{G*}) leads to the desired result.

$W$ is convex and closed and hence $W$ is the convex conjugate of $W^{\ast }$
\citep[][Thm. 12.2]{Rockafellar1970}. This, along with Fenchel's equality %
\citep[Thm. 23.5]{Rockafellar1970}, proves part (iii). Finally, for part
(i), let $\mathbf{q}$ be a solution to problem (\ref{surplusconjugate}).
Then, by the homogeneity of $\mathbf{H}$ we have $\mathbf{q}=\frac{1}{\alpha
}\mathbf{H}(e^{\tilde{\mathbf{v}}})$, where $\alpha =\sum_{j=1}^{N}H_{j}(e^{%
\tilde{\mathbf{v}}})$. Then, by the definition of $\mathbf{S}$ it follows
that $\mathbf{S}(\mathbf{q})=\frac{e^{\tilde{\mathbf{v}}}}{\alpha }$.
Replacing the latter expression in Eq. (\ref{surplusconjugate}) we get
\begin{eqnarray*}
W(\tilde{\mathbf{v}}) &=&\mathbf{q}\tilde{\mathbf{v}}-\mathbf{q}\log \left(
e^{\tilde{\mathbf{v}}}/\alpha \right) , \\
&=&\mathbf{q}\tilde{\mathbf{v}}-\mathbf{q}\left( \log e^{\tilde{\mathbf{v}}%
}+\log \alpha \right) , \\
&=&\log \left( \sum_{j=1}^{N}H_{j}(e^{\tilde{\mathbf{v}}})\right) .
\end{eqnarray*}

\end{proof}

\bigskip

\begin{proof}[Proof of Proposition \protect\ref{Generator}]
Continuity of $\mathbf{S}$ follows from continuity of the partial
derivatives of $W$, which is immediate from the definition. Homogeneity of $%
\mathbf{S}$ is equivalent to homogeneity of $\mathbf{H}$. Using the
homogeneity property of $W$
\begin{equation*}
\mathbf{S}^{-1}(\lambda e^{\tilde{\mathbf{v}}})=\nabla_{\mathbf{v}}(e^{W(%
\tilde{\mathbf{v}}+\log \lambda)})=\lambda \nabla_{\mathbf{v}}(e^{W(\tilde{%
\mathbf{v}})})=\lambda\mathbf{S}^{-1}(e^{\tilde{\mathbf{v}}}),
\end{equation*}

\noindent which shows that $\mathbf{H}$ and hence $\mathbf{S}$ are
homogenous of degree 1.\newline

The requirement that $\sum_{i=1}^{N}q_{i}\frac{\partial \log S_{i}(\mathbf{q}%
)}{\partial q_{k}}=1$ in the relative interior of the unit simplex $\Delta $
may be expressed in matrix notation as
\begin{equation*}
(q_{1},\ldots ,q_{N})\cdot J_{\log \mathbf{S}}(\mathbf{q})=(1,\ldots ,1),
\end{equation*}%
where%
\begin{equation*}
J_{\log \mathbf{S}}(\mathbf{q})=\left\{ \frac{\partial \log S_{i}\left(
\mathbf{q}\right) }{\partial q_{j}}\right\} _{i,j=1}^{N}
\end{equation*}%
is the Jacobian of $\log \mathbf{S}(\mathbf{q})$.

Defining $\mathbf{\hat{t}}\equiv\log \mathbf{S}(\mathbf{q})$, we have $%
\mathbf{q}=\mathbf{H}\left( e^{\mathbf{\hat{t}}}\right) $ and hence $W\left(
e^{\mathbf{\hat{t}}}\right) =\log (\mathbf{1}\cdot \mathbf{H}(e^{\mathbf{%
\hat{t}}}))=\log 1=0$ by Proposition \ref{General_Entropy}. Noting that $%
(\log (\mathbf{S}))^{-1}(\mathbf{\hat{t}})=\mathbf{H}(e^{\mathbf{\hat{t}}})$
the requirement in part (ii) is equivalent to
\begin{equation*}
(q_{1},\ldots ,q_{N})=(q_{1},\ldots ,q_{N})\cdot J_{\log \mathbf{S}}(\mathbf{%
q})\cdot J_{(\log \mathbf{S})^{-1}}(\mathbf{\hat{t}})=(1,\ldots ,1)\cdot J_{%
\mathbf{H}(e^{\mathbf{\hat{t}}})}(\mathbf{\hat{t}}).
\end{equation*}

Now, use the {Williams-Daly-Zachary theorem to find that }
\begin{equation*}
(1,\ldots ,1)\cdot J_{\mathbf{H}(e^{\mathbf{\hat{t}}})}(\mathbf{\hat{t}}%
)=\nabla _{\mathbf{\hat{t}}}\left( e^{W\left( \mathbf{\hat{t}}\right)
}\right) =e^{W(\tilde{\mathbf{v}})}\left( q_{1},\ldots q_{N}\right) =\left(
q_{1},\ldots q_{N}\right) .
\end{equation*}%
as required. \bigskip

Part (ii) follows from Proposition~\ref{General_Entropy}(ii).
\end{proof}

\bigskip

\begin{proof}[Proof of proposition \protect\ref{GERI_Solution}]
The Lagrangian for the DM's problem is%
\begin{equation*}
\Lambda =\mathbb{E}\left( \mathbf{V}\cdot \mathbf{A}\right) -\kappa _{%
\mathbf{S}}(\mathbf{p},\mu )+\mathbb{E}\left( \gamma \left( \mathbf{V}%
\right) \left( 1-\sum\limits_{j}p_{j}\left( \mathbf{V}\right) \right)
\right) +\mathbb{E}\left( \sum\limits_{j}\xi _{j}\left( \mathbf{V}\right)
p_{j}\left( \mathbf{V}\right) \right) ,
\end{equation*}%
where $\gamma \left( \mathbf{V}\right) $ and $\xi _{j}\left( \mathbf{V}%
\right) $ are Lagrange multipliers corresponding to condition (\ref%
{constrain2}).

Before we derive the first-order conditions for $p_{j}\left( \mathbf{v}%
\right) $ it is useful to note that we may regard the terms $\log \mathbf{S}%
\left( \mathbf{p}^{0}\right) $ and $\log \mathbf{S}\left( \mathbf{p}\left(
\mathbf{v}\right) \right) $ in the information cost $\kappa _{\mathbf{S}}(%
\mathbf{p},\mu )$ as constant, since their derivatives cancel out by
Proposition~\ref{Generator}(iii). Define $\tilde{v}_{j}=v_{j}+\xi _{j}\left(
\mathbf{v}\right) +\log S_{j}\left( \mathbf{p}^{0}\right) $ and $\mathbf{%
\tilde{v}}=\left( \tilde{v}_{1},...,\tilde{v}_{N}\right) $. Then the
first-order condition for $p_{j}\left( \mathbf{v}\right) $ is easily found
to be
\begin{equation}
\log \mathbf{S}_{j}\left( \mathbf{p}\left( \mathbf{v}\right) \right) =\tilde{%
v}_{j}-\gamma \left( \mathbf{v}\right) .  \label{eq:3}
\end{equation}%
This fixes $\mathbf{p}\left( \mathbf{v}\right) $ as a function of $\mathbf{p}%
^{0}$ since then
\begin{equation}
\mathbf{p}\left( \mathbf{v}\right) =\mathbf{H}\left( e^{\mathbf{\tilde{v}}%
}\right) \exp \left( -\gamma \left( \mathbf{v}\right) \right) .  \label{eq:4}
\end{equation}

If some $p_{j}\left( \mathbf{v}\right) =0$, then we must have $\tilde{v}%
_{j}=-\infty $, which implies that $S_{j}\left( \mathbf{p}^{0}\right) =0$
and the value of $\xi _{j}\left( \mathbf{v}\right) $ is irrelevant. If $%
p_{j}\left( \mathbf{v}\right) >0$, then $\xi _{j}\left( \mathbf{v}\right) =0$%
. We may then simplify by setting $\xi _{j}\left( \mathbf{v}\right) =0$ for
all $j,\mathbf{v}$ at no loss of generality, which means that $\tilde{v}%
_{j}=v_{j}+\log S_{j}\left( \mathbf{p}^{0}\right) $.

Using that probabilities sum to 1 leads to%
\begin{equation*}
\exp \left( \gamma \left( \mathbf{v}\right) \right)
=\sum\limits_{j}H_{j}\left( e^{\mathbf{\tilde{v}}}\right)
\end{equation*}%
and hence (i) follows. Item (ii) then follows immediately.

Now substitute (\ref{GERI_choices}) back into the objective, using $%
p_{j}\left( \mathbf{v}\right) \xi _{j}\left( \mathbf{v}\right) =0$ , to find
that it reduces to%
\begin{equation}
\Lambda =\mathbb{E}\gamma \left( \mathbf{V}\right) =\mathbb{E}\log
\sum\limits_{j}H_{j}\left( e^{\mathbf{\tilde{v}}}\right)  \label{eq:2}
\end{equation}

We may then use (\ref{eq:2}) to determine $\mathbf{p}^{0}$. Now apply Eq. (%
\ref{surplusentropy}) to establish part (iii) of the proposition.
\end{proof}

\bigskip

\begin{proof}[Proof of proposition \protect\ref{thm:consideration}]
Assume, towards a contradiction, that $p_{a}^{0}>0$. Then%
\begin{eqnarray}
p_{a}^{0} &=&\mathbb{E}\left( \frac{H_{a}\left( \left\{ e^{V_c} S_{c}\left(
\mathbf{p}^{0}\right) \right\} _{c=1}^{N}\right) }{\sum\limits_{b}H_{b}%
\left( \left\{ e^{V_c} S_{c}\left( \mathbf{p}^{0}\right) \right\}
_{c=1}^{N}\right) }\right) \\
&<&\mathbb{E}\left( \frac{H_{a}\left( \left\{e^{V_a} S_{c}\left( \mathbf{p}%
^{0}\right) \right\} _{c=1}^{N}\right) }{\sum\limits_{b}H_{b}\left( \left\{
e^{V_a} S_{c}\left( \mathbf{p}^{0}\right) \right\} _{c=1}^{N}\right) }\right)
\label{cminequality} \\
&=&\mathbb{E}\left( \frac{e^{V_a} H_{a}\left( \left\{ S_{c}\left( \mathbf{p}%
^{0}\right) \right\} _{c=1}^{N}\right) }{e^{V_a}\sum\limits_{b}H_{b}\left(
\left\{ S_{c}\left( \mathbf{p}^{0}\right) \right\} _{c=1}^{N}\right) }%
\right) =\mathbb{E}\left( \frac{p^{0}_{a}}{\sum\limits_{b}p^{0}_{b}}\right)
=p^{0}_{a}.  \label{penultimate}
\end{eqnarray}%
The first inequality (\ref{cminequality}) follows from cyclic monotonicity,
which is a property of the gradient of convex functions. (See, for instance, %
\citet[Thm. 23.5]{Rockafellar1970}.) Since the surplus function $W$ is
convex, its gradient, corresponding to the choice probabilities $\mathbf{p}%
(\cdot )$ is a cyclic monotone mapping, implying that
\begin{equation*}
\left[ \mathbf{p}\left( \left\{ e^{v_a} S_{c}\left( \mathbf{p}^{0}\right)
\right\} _{c=1}^{N}\right) -\mathbf{p}\left( \left\{ e^{v_c} S_{c}\left(
\mathbf{p}^{0}\right) \right\} _{c=1}^{N}\right) \right] \cdot \left[
\left\{ e^{v_a} S_{c}\left( \mathbf{p}^{0}\right) \right\}
_{c=1}^{N}-\left\{ e^{v_c} S_{c}\left( \mathbf{p}^{0}\right) \right\}
_{c=1}^{N}\right] \geq 0.
\end{equation*}%
All the terms within the second pair of brackets on the LHS are $\leq 0$,
except for the $a$-th term, which is equal to zero. In order to satisfy the
inequality, then, we must have
\begin{equation*}
p_{a}\left( \left\{ e^{v_a} S_{c}\left( \mathbf{p}^{0}\right) \right\}
_{c=1}^{N}\right) \geq p_{i}\left( \left\{ e^{v_c} S_{c}\left( \mathbf{p}%
^{0}\right) \right\} _{c=1}^{N}\right)
\end{equation*}
with the inequality strict with positive probability. Otherwise,
\begin{equation*}
\sum_{i\neq a}\left\{ p_{i}\left( \left\{ e^{v_a} S_{c}\left( \mathbf{p}%
^{0}\right) \right\} _{c=1}^{N}\right) -p_{i}\left( \left\{ e^{v_c}
S_{c}\left( \mathbf{p}^{0}\right) \right\} _{c=1}^{N}\right) \right\} >0
\end{equation*}
and
\begin{align*}
& \left[ \mathbf{p}\left( \left\{ e^{v_a} S_{c}\left( \mathbf{p}^{0}\right)
\right\} _{c=1}^{N}\right) -\mathbf{p}\left( \left\{ e^{v_c} S_{c}\left(
\mathbf{p}^{0}\right) \right\} _{c=1}^{N}\right) \right] \cdot \left[
\left\{ e^{v_a} S_{c}\left( \mathbf{p}^{0}\right) \right\}
_{c=1}^{N}-\left\{ e^{v_c} S_{c}\left( \mathbf{p}^{0}\right) \right\}
_{c=1}^{N}\right] \\
=& \sum_{c\neq a}\left[ {p}_{c}\left( \left\{ e^{v_a} S_{c}\left( \mathbf{p}%
^{0}\right) \right\} _{c=1}^{N}\right) -{p}_{c}\left( \left\{ e^{v_c}
S_{c}\left( \mathbf{p}^{0}\right) \right\} _{c=1}^{N}\right) \right] \left[
e^{v_a} -e^{v_c} \right] S_{c}\left( \mathbf{p}^{0}\right) \\
\leq & \max_{c\neq a}\left[ \left( e^{v_a} -e^{v_c} \right) S_{c}\left(
\mathbf{p}^{0}\right) \right] \sum_{c\neq a}\left[ {p}_{c}\left( \left\{
e^{v_a} S_{c}\left( \mathbf{p}^{0}\right) \right\} _{c=1}^{N}\right) -{p}%
_{c}\left( \left\{ e^{v_c} S_{c}\left( \mathbf{p}^{0}\right) \right\}
_{c=1}^{N}\right) \right] \leq 0
\end{align*}%
with the final inequality strict with positive probability. Hence, we
conclude that $p^{0}_{a}=0$.
\end{proof}

\bigskip In the case of the Shannon entropy, Corollary \ref%
{thm:consideration} can be strengthened considerably. In that case, any
alternative that is dominated by another alternative in all states of the
world will never be chosen, as shown in the following corollary:

\begin{corollary}
\label{thm:considerationShannon} Let $S$ be the identity. Suppose that
option $a$ is dominated by option $d$ in the sense that $\forall \mathbf{v}%
\in \mathcal{V}:v_{a}\leq v_{d}$ with strict inequality for some $\mathbf{v}$%
. Then $p_{a}^{0}=0$.
\end{corollary}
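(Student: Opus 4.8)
The plan is to specialize the GERI solution to the Shannon case and exploit the fact that, there, the single scalar $\mathbb{E}[e^{V_i}/D(\mathbf{V})]$ governs every option. Since $\mathbf{S}$ (hence $\mathbf{H}$) is the identity, Proposition~\ref{GERI_Solution} collapses to the multinomial logit form $p_i(\mathbf{v})=p_i^{0}e^{v_i}/D(\mathbf{v})$ of (\ref{Generalized_Logit}), where I abbreviate $D(\mathbf{v}):=\sum_{c}p_c^{0}e^{v_c}$, and the fixed-point equation (\ref{GERI_fixedpoint}) reads $p_i^{0}=\mathbb{E}[\,p_i^{0}e^{V_i}/D(\mathbf{V})\,]$ for every $i$. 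I would argue by contradiction, assuming $p_a^{0}>0$. Dividing the fixed-point equation for option $a$ by $p_a^{0}$ then gives the exact identity $\mathbb{E}[\,e^{V_a}/D(\mathbf{V})\,]=1$.

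Next I would establish the key optimality inequality $\mathbb{E}[\,e^{V_i}/D(\mathbf{V})\,]\le 1$ for \emph{every} option $i$, with equality precisely when $p_i^{0}>0$. By Proposition~\ref{GERI_Solution}(iii) with $\mathbf{S}$ the identity, the optimized value, viewed as a function of a candidate unconditional vector $\mathbf{q}\in\Delta$, is $g(\mathbf{q})=\mathbb{E}\log(\sum_{c}q_c e^{V_c})$, which is concave (the concave $\log$ composed with a linear map, under expectation); the optimal $\mathbf{p}^{0}$ maximizes $g$ over the simplex (the standard reduction of the Shannon problem, cf. \citet{MatejkaMcKay15}). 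Since $\partial g/\partial q_i=\mathbb{E}[e^{V_i}/D]$, the Kuhn--Tucker conditions for this concave program, with a single multiplier $\lambda$ for $\mathbf{1}\cdot\mathbf{q}=1$, give $\mathbb{E}[e^{V_i}/D]\le\lambda$ for all $i$ and $=\lambda$ when $p_i^{0}>0$; summing $p_i^{0}\,\mathbb{E}[e^{V_i}/D]$ over $i$ and using $\sum_{c}p_c(\mathbf{v})=1$ pointwise forces $\lambda=1$. In particular $\mathbb{E}[\,e^{V_d}/D\,]\le 1$ for the dominating option $d$, \emph{even if} $p_d^{0}=0$.

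I would then combine the two facts using the domination hypothesis. Because $v_a\le v_d$ on all of the finite support $\mathcal{V}$ with strict inequality on a state of positive probability, and because $1/D(\mathbf{v})>0$ throughout (some $p_c^{0}>0$ since they sum to one), monotonicity of the expectation yields the strict comparison $1=\mathbb{E}[\,e^{V_a}/D\,]<\mathbb{E}[\,e^{V_d}/D\,]\le 1$, a contradiction. Hence $p_a^{0}=0$.

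The main obstacle is the second step: the fixed-point equation (\ref{GERI_fixedpoint}) on its own only delivers the \emph{equality} $\mathbb{E}[e^{V_i}/D]=1$ for options already in the consideration set, which says nothing about the dominating option $d$ when $p_d^{0}=0$; one genuinely needs the no-profitable-inclusion inequality, i.e. the full optimality (concave-maximization) characterization of $\mathbf{p}^{0}$, not merely stationarity. What makes the strengthening go through — and what confines it to the Shannon case — is that the logit form lets the common factor $p_i^{0}$ separate out, so a single quantity $\mathbb{E}[e^{V_i}/D]$ ranks all options through one concave problem. By contrast, the homogeneity and cyclic-monotonicity argument behind Corollary~\ref{thm:consideration} only compares option $a$ against the \emph{simultaneous} lowering of all rivals and cannot isolate one dominating competitor; for a general generator $\mathbf{S}$ the term $H_a(\{e^{V_c}S_c(\mathbf{p}^{0})\})$ does not factor to permit the pairwise comparison, which is why the sharper dominance conclusion is special to $\mathbf{S}$ being the identity.
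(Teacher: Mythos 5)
Your proof is correct and follows essentially the same route as the paper's: derive the identity $\mathbb{E}\bigl[e^{V_a}/\sum_b p_b^0 e^{V_b}\bigr]=1$ from the fixed point when $p_a^0>0$, then use pointwise domination to get $\mathbb{E}\bigl[e^{V_d}/\sum_b p_b^0 e^{V_b}\bigr]>1$ and contradict an upper bound of $1$. The one substantive difference is that the paper's proof simply asserts that $\mathbb{E}\bigl[e^{V_d}/\sum_b p_b^0 e^{V_b}\bigr]>1$ ``is a contradiction'' without saying why this is impossible when $p_d^0=0$ (in which case the fixed-point equation for $d$ is vacuous); you correctly flag this as the crux and supply the missing step, namely the Kuhn--Tucker inequality $\mathbb{E}[e^{V_i}/D]\le 1$ for \emph{all} $i$ obtained from the concave reduced problem $\max_{\mathbf{q}\in\Delta}\mathbb{E}\log(\sum_c q_c e^{V_c})$, with the multiplier pinned to $1$ by the summation argument. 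That reduction is standard for the Shannon case (Matejka--McKay) but is not stated in this paper, so your write-up is more complete than the paper's own; your closing observation about why the pairwise-dominance conclusion does not extend to general generators $\mathbf{S}$ also matches the authors' reason for confining the stronger statement to the identity case. One small point: the corollary as stated only requires strict inequality ``for some $\mathbf{v}\in\mathcal{V}$,'' whereas your strict expectation comparison needs that state to carry positive prior mass — harmless if $\mathcal{V}$ is taken to be the support of $\mu$, and the paper's parent Corollary~\ref{thm:consideration} phrases the hypothesis that way.
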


\begin{proof}
Suppose to get a contradiction that $p_{a}^{0}>0$. From (\ref%
{Generalized_Logit}), obtain that for all options $a$%
\begin{equation*}
1=\frac{p^0_a}{p^0_a}=\frac{1}{p^0_a}\mathbb{E}p_a(\mathbf{V})= \mathbb{E}%
\left( \frac{\exp \left( V_{a}\right) }{\sum\limits_{b}\exp \left(
V_{b}\right) p^{0}_{b}}\right).
\end{equation*}%
Then
\begin{equation*}
\mathbb{E}\left( \frac{\exp \left( V_{d}\right) }{\sum\limits_{b}\exp \left(
V_{b}\right) p^{0}_{b}}\right) >1,
\end{equation*}%
which is a contradiction.
\end{proof}

\bigskip

\begin{proof}[Proof of Proposition \protect\ref{niceproperties}]
\emph{Independence:} By independence, we have, for all $i$, $p_i(\mathbf{v}%
)=k_i$, a constant. Then $p^0_i=k_i$ and $\kappa_{\mathbf{S}}(\mathbf{p}%
(\cdot),\mu)=0$.

\emph{Convexity:} 
Consider two sets of choice probabilities $\mathbf{p}_{1}\left( \mathbf{v}%
\right) ,\mathbf{p}_{2}\left( \mathbf{v}\right) ,\mathbf{v}\in \mathcal{V}$,
where both have the same implied unconditional probabilities $\mathbb{E}%
\mathbf{p}_{1}(\mathbf{V})=\mathbb{E}\mathbf{p}_{2}(\mathbf{V})$. For $\rho
\in \lbrack 0,1]$, define $\mathbf{p}_{\rho }$ as the convexification $\rho
\mathbf{p}_{1}\left( \mathbf{v}\right) +\left( 1-\rho \right) \mathbf{p}%
_{2}\left( \mathbf{v}\right) $. Then we would like to show that
\begin{equation*}
\rho \kappa_{\mathbf{S}} \left( \mathbf{p_{1}}(\cdot),\mu \right) +\left(
1-\rho \right) \kappa_{\mathbf{S}} \left( \mathbf{p_{2}}(\cdot),\mu \right)
\geq \kappa \left( \mathbf{p_{\rho }}(\cdot),\mu \right) .
\end{equation*}%
But
\begin{align*}
& \rho \kappa_{\mathbf{S}} \left( \mathbf{p_{1}}(\cdot),\mu \right) +\left(
1-\rho \right) \kappa_{\mathbf{S}} \left( \mathbf{p_{2}}(\cdot)\,\mu \right)
-\kappa \left( \mathbf{p_{\rho }}(\cdot),\mu \right) \\
=& -\rho \Omega_{\mathbf{S}} \left( \mathbf{p}_{1}\right) -\left( 1-\rho
\right) \Omega_{\mathbf{S}} \left( \mathbf{p}_{2}\right) +\Omega_{\mathbf{S}%
} \left( \rho \mathbf{p}_{1}+\left( 1-\rho \right) \mathbf{p}_{1}\right) ,
\end{align*}%
which is positive by concavity of $\Omega_{\mathbf{S}} \left( \mathbf{p}%
\right) $ (Proposition~\ref{Generator}(ii)).
\end{proof}

\bigskip

\begin{lemma}
\label{h is invertible}$\mathbf{H}$ is invertible.
\end{lemma}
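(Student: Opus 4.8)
The plan is to reduce the claim to invertibility of the gradient of a strictly convex function and then to upgrade local invertibility to global invertibility via homogeneity. Write $F(\tilde{\mathbf{v}})=e^{W(\tilde{\mathbf{v}})}$ and recall from (\ref{h function}) that $\mathbf{H}(\mathbf{x})=\nabla_{\tilde{\mathbf{v}}}F(\tilde{\mathbf{v}})$ evaluated at $\tilde{\mathbf{v}}=\log\mathbf{x}$. Since $\tilde{\mathbf{v}}\mapsto \mathbf{x}=e^{\tilde{\mathbf{v}}}$ is a smooth bijection from $\mathbb{R}^{N}$ onto $\mathbb{R}_{+}^{N}$, it suffices to show that $\nabla F$ is a bijection onto its image. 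Note also that $\mathbf{H}(\mathbf{x})=e^{W}\mathbf{q}$ with $e^{W}>0$ and, under Assumption~\ref{RUM_conditions}, $\mathbf{q}=\nabla W\in\relint\Delta$, so $\mathbf{H}$ indeed takes values in $\mathbb{R}_{+}^{N}$.

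The crux is to show that $F=e^{W}$ is strictly convex. Where $W$ is twice differentiable (which holds under Assumption~\ref{RUM_conditions} and is implicit in the later differentiability of $\mathbf{S}$), one computes $\nabla^{2}F=e^{W}\bigl(\nabla^{2}W+\mathbf{q}\mathbf{q}^{\top}\bigr)$. Since $W$ is convex, $\nabla^{2}W\succeq 0$, hence $\nabla^{2}F\succeq 0$ and $F$ is convex. For strict positive definiteness, suppose $\mathbf{z}^{\top}\nabla^{2}F\,\mathbf{z}=0$; because both summands are positive semidefinite this forces $\nabla^{2}W\,\mathbf{z}=\mathbf{0}$ and $\mathbf{q}\cdot\mathbf{z}=0$. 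Differentiating the identity $\mathbf{1}\cdot\mathbf{q}(\tilde{\mathbf{v}})=1$ shows $\mathbf{1}\in\ker\nabla^{2}W$, and the key step is to argue that under full support of $\boldsymbol{\epsilon}$ the substitution matrix $\nabla^{2}W$ has rank exactly $N-1$, i.e.\ $\ker\nabla^{2}W=\mathrm{span}\{\mathbf{1}\}$. Granting this, $\mathbf{z}=c\mathbf{1}$, and then $\mathbf{q}\cdot\mathbf{z}=c=0$, so $\mathbf{z}=\mathbf{0}$. Hence $\nabla^{2}F\succ 0$ and $F$ is strictly convex, so $\nabla F$ is strictly monotone and therefore injective; equivalently $\mathbf{H}$ is injective. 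Intuitively, $W$ is only affine in the direction $\mathbf{1}$ (since $W(\tilde{\mathbf{v}}+k\mathbf{1})=k+W(\tilde{\mathbf{v}})$), and the exponential restores strict convexity there, which is exactly why one works with $e^{W}$ rather than $W$.

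It remains to obtain surjectivity onto the image together with continuity of the inverse, for which I would exploit homogeneity. From $W(\tilde{\mathbf{v}}+k\mathbf{1})=k+W(\tilde{\mathbf{v}})$ one gets $\mathbf{H}(\lambda\mathbf{x})=\lambda\mathbf{H}(\mathbf{x})$ for all $\lambda>0$, so $\mathbf{H}$ is positively homogeneous of degree $1$. Strict convexity of $F$ makes the Jacobian $J_{\mathbf{H}}(\mathbf{x})=\nabla^{2}F\cdot\mathrm{diag}(e^{-\tilde{v}_{j}})$ nonsingular, so by the inverse function theorem $\mathbf{H}$ is a local homeomorphism. A continuous, positively homogeneous, locally injective map is a global homeomorphism onto its image by the global inversion theorem for homogeneous maps of \citet{RuzhanskySugimoto2015}; hence $\mathbf{H}$ is invertible, and its inverse is the generator $\mathbf{S}=\mathbf{H}^{-1}$ of (\ref{S_function}). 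The main obstacle is the strict-convexity step: one must verify that the only degeneracy of $\nabla^{2}W$ is the affine behavior of $W$ along $\mathbf{1}$ (i.e.\ that $\nabla^{2}W$ has rank $N-1$), and this is precisely where full support of $\boldsymbol{\epsilon}$ in Assumption~\ref{RUM_conditions} is essential.
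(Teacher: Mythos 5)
Your proof follows the paper's strategy in its essentials: the same decomposition of the Jacobian of $\tilde{\mathbf{v}}\mapsto\mathbf{H}(e^{\tilde{\mathbf{v}}})$ into $e^{W}\bigl(\nabla^{2}W+\mathbf{q}\mathbf{q}^{\top}\bigr)$, and the same appeal to the global inversion theorem of \citet{RuzhanskySugimoto2015}. Where you differ is instructive. The paper disposes of positive definiteness by asserting that $e^{W}\mathbf{q}\mathbf{q}^{\top}$ is positive definite ``since all choice probabilities are positive''; for $N\geq 2$ that matrix has rank one and is only positive semidefinite, and your kernel analysis --- $\mathbf{z}^{\top}\nabla^{2}F\,\mathbf{z}=0$ forces $\mathbf{z}\in\ker\nabla^{2}W$ and $\mathbf{q}\cdot\mathbf{z}=0$, which is impossible if $\ker\nabla^{2}W=\mathrm{span}\{\mathbf{1}\}$ --- is the correct way to make that step work. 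You also obtain global injectivity directly from strict monotonicity of $\nabla F$ for the strictly convex $F=e^{W}$, which is cleaner than the paper's purely local nonvanishing-Jacobian argument and does not need the homogeneous-map machinery for injectivity at all.

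Two gaps remain. First, you grant rather than prove that $\nabla^{2}W$ has rank $N-1$ under Assumption~\ref{RUM_conditions}; this is true (it is the standard invertibility property of the choice-probability system when $\boldsymbol{\epsilon}$ is absolutely continuous with full support), but it is the entire content of the positive-definiteness step, and neither you nor, in fairness, the paper supplies the argument. Second, your paraphrase of Ruzhansky--Sugimoto is not the theorem: a continuous, positively homogeneous degree-one, locally injective map need not be a global homeomorphism (in $\mathbb{R}^{2}$ the map $z\mapsto z^{2}/|z|$ is homogeneous of degree one with nonvanishing Jacobian off the origin, yet two-to-one). The hypothesis the paper actually verifies --- and which occupies the second half of its proof --- is the lower bound $\inf_{\mathbf{y}\in\Delta}\left\Vert\mathbf{H}(\mathbf{y})\right\Vert>0$, which you omit entirely. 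Because your monotonicity argument already yields global injectivity, this omission does not threaten injectivity, but it does leave unestablished exactly what the theorem is invoked for: that $\mathbf{H}$ is a homeomorphism onto a suitable cone, so that $\mathbf{S}=\mathbf{H}^{-1}$ is defined at every probability vector where the paper uses it. If you keep the monotonicity route, either verify the $\inf$ condition or argue directly that the open image of the local diffeomorphism $\mathbf{H}$ contains the cone over $\relint\Delta$.
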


\begin{proof}[Proof of Lemma \protect\ref{h is invertible}]
We shall make use of Ruzhansky and Sugimoto's %
\citeyear{RuzhanskySugimoto2015} invertibility result applied to $\mathbf{H}$%
. The Jacobian of $\tilde{\mathbf{v}}\rightarrow \mathbf{H}\left( e^{\tilde{%
\mathbf{v}}}\right) $ is $\left\{ e^{W\left( \tilde{\mathbf{v}}\right) }%
\frac{\partial W\left( \tilde{\mathbf{v}}\right)}{\partial v_i}\frac{%
\partial W\left( \tilde{\mathbf{v}}\right)}{\partial v_j} \right\} +\left\{
e^{W\left( \tilde{\mathbf{v}}\right) }\frac{\partial^2 W\left( \tilde{%
\mathbf{v}}\right)}{\partial v_i \partial v_j} \right\} $. The first matrix
is positive definite since all choice probabilities are positive, the second
matrix is positive semidefinite due to the convexity of $W$, hence this
matrix is everywhere positive definite and then the Jacobian determinant of $%
\tilde{\mathbf{v}}\rightarrow \mathbf{H}\left( e^{\tilde{\mathbf{v}}}\right)
$ never vanishes. This implies in turn that the Jacobian determinant of the
composition $\mathbf{y}\rightarrow \log \mathbf{y}\rightarrow \mathbf{H}%
\left( \mathbf{y}\right) $ never vanishes. It remains to show that $\inf_{%
\mathbf{y}\in \Delta }\left\Vert \mathbf{H}\left( \mathbf{y}\right)
\right\Vert >0$. But $\mathbf{y}\in \Delta $ implies that
\begin{eqnarray*}
\left\Vert \mathbf{H}\left( \mathbf{y}\right) \right\Vert &=&e^{W\left( \log
\mathbf{y}\right) }\left\Vert \nabla W\left( \log \mathbf{y}\right)
\right\Vert \\
&\geq &e^{\mathbb{E}\max_{j}\left\{ \log y_{j}+\varepsilon _{j}\right\}
}J^{-1/2} \\
&\geq &e^{\max_{j}\left\{ \log y_{j}+\mathbb{E}\varepsilon _{j}\right\}
}J^{-1/2} \\
&=&\max_{j}\left\{ y_{j}e^{\mathbb{E}\varepsilon _{j}}\right\} J^{-1/2} \\
&\geq &\left\Vert \left( y_{1}e^{\mathbb{E}\varepsilon _{1}},...,y_{J}e^{%
\mathbb{E}\varepsilon _{N}}\right) \right\Vert J^{-1} \\
&\geq &\left( \sum\limits_{j=1}^{N}e^{-2\mathbb{E}\varepsilon _{j}}\right)
^{-1}J^{-1}>0,
\end{eqnarray*}%
where we first used that $\nabla W$ is on the unit simplex, second that the
max operation is convex, third that the sup-norm bounds the euclidean norm,
and fourth that the minimum of $\left\Vert \left( y_{1}e^{\mathbb{E}%
\varepsilon _{1}},...,y_{N}e^{\mathbb{E}\varepsilon _{N}}\right) \right\Vert
$ on the unit simplex is attained at $y_{j}=e^{-2\mathbb{E}\varepsilon
_{j}}\left( \sum\limits_{k=1}^{N}e^{-2\mathbb{E}\varepsilon _{k}}\right)
^{-1},j=1,...,N$.
\end{proof}

\section{Example: Consideration sets and failure of regularity}

\label{nonregularity}

Next, we consider a fully solved out example illustrating the possibility of
zero unconditional choice probabilities and failure of regularity, which can
occur in the rational inattention framework but not in the discrete choice
model, and represent an important point of difference between the two
models. \citet[pp. 293ff]{MatejkaMcKay15} have demonstrated that failures of
regularity can occur in the RI model under Shannon entropy. We show that
such failures also occur in a GERI model, in particular for the nested logit
information cost function introduced in Section~\ref{nestedlogitexample} of
the main text.

Consider a setting with four choice options. Table \ref{tableexample} lists
the valuation vectors for these four options in the three equiprobable
states of the world. We consider both the Shannon and GERI-nested logit
models. (For the nested logit specification, we assume that nest 1 consists
of choices (1,2) with nesting parameter $\zeta _{1}=0.7$, and nest 2
consists of choices (3,4) with parameter $\zeta _{2}=0.8$.

\begin{table}[htbp]
\begin{center}
\begin{tabular}{|l||ccc|}
\hline
State: & $\mathbf{v}^1$ & $\mathbf{v}^2$ & $\mathbf{v}^3$ \\ \hline\hline
Choice 1 & 2 & 3 & 3 \\
Choice 2 & 1 & 2 & 2 \\
Choice 3 & 3 & 1 & 3 \\
Choice 4 & 2 & 4 & 2 \\ \hline
\end{tabular}%
\end{center}
\caption{Valuation vectors in Example 2}
\label{tableexample}
\end{table}

\begin{table}[tbph]
\begin{center}
\begin{tabular}{|c||cccc|}
\hline
Model: & Shannon & Shannon & GERI- & GERI- \\
&  &  & nested logit & nested logit \\ \hline
Choice set: & $\left\{1,2,3\right\}$ & $\left\{1,2,3,4\right\}$ & $%
\left\{1,2,3\right\}$ & $\left\{1,2,3,4\right\}$ \\ \hline
$p^0_1$ & 0.71 & 0.00 & 0.71 & 0.00 \\
$p^0_2$ & 0.00 & 0.00 & 0.00 & 0.00 \\
$p^0_3$ & 0.29 & 0.51 & 0.29 & 0.57 \\
$p^0_4$ & --- & 0.49 & --- & 0.43 \\ \hline
Optimized surplus: &  &  &  &  \\
$\mathbb{E} W(\mathbf{V}+\log\mathbf{S}(\mathbf{p}^0))$ & 2.705 & 2.865 &
4.222 & 6.032 \\ \hline
\end{tabular}%
\end{center}
\caption{Optimal unconditional probabilities for Example 3}
\label{tablepriors}
\end{table}

For each model, we compute the optimal unconditional probabilities (which as
in the previous example, requires solving the fixed-point equation (\ref%
{GERI_fixedpoint})) first for the choice set $\left\{ 1,2,3\right\} $, and
then for the expanded choice set $\left\{ 1,2,3,4\right\} $. This example
illustrates how adding option 4 to the choice set can results in increases
in the choice probabilities of choices (1,2,3) thus showing a failure of the
regularity property. The optimal unconditional probabilities are shown in
Table \ref{tablepriors}. Qualitatively the results are the same between both
the Shannon and GERI-nested logit specifications. With the smaller set of
options, we see that only options 1,2 are chosen with positive
probabilities. When option 4 is added, however, then option 1 drops out of
the consideration set, and only options 3,4 are chosen with positive
probability. This demonstrates a failure of regularity, as the addition of
choice 4 \emph{increases} the prior choice probability for choice 1.
(Moreover, note that with the expanded choice set, option 2 is chosen with
zero probability, even though it is not inferior in all states of the world,
which demonstrates that the characterization of consideration sets in
Corollary \ref{thm:consideration} is not exhaustive.)

Basically, the addition of choice 4 allows agents to form an effective
\textquotedblleft hedge\textquotedblright\ in conjunction with choice 3. In
the state when choice 3 yields a low payoff (state $\mathbf{v}^{2}$), choice
4 yields a high payoff; on the contrary, when choice 4 yields a lower
payoffs (states $\mathbf{v}^{1}$ and $\mathbf{v}^{3}$), choice 3 yields high
payoffs.

\end{document}